\documentclass[11pt]{article}
\usepackage[a4paper,margin=1in]{geometry}
\usepackage[T1]{fontenc}
\usepackage[utf8]{inputenc}
\usepackage{lmodern}
\usepackage{amsmath,amssymb,amsthm,mathtools}
\usepackage[ruled,vlined,linesnumbered]{algorithm2e}
\usepackage{microtype}
\usepackage{needspace}
\usepackage{flafter}
\usepackage{placeins}
\usepackage[round,authoryear]{natbib}
\usepackage{xurl}
\usepackage[hidelinks]{hyperref}
\hypersetup{pdftitle={Budget-Independent Influence Maximization in Nearly Linear Time},pdfauthor={Zhijie Zhang},pdflang={en-US}}
\theoremstyle{plain}
\newtheorem{theorem}{Theorem}[section]
\newtheorem{lemma}[theorem]{Lemma}

\theoremstyle{definition}
\newtheorem{definition}[theorem]{Definition}
\theoremstyle{remark}

\SetKwInput{KwIn}{Input}
\SetKwInput{KwOut}{Output}
\SetKw{Return}{return}
\SetKwComment{Comment}{$\triangleright$ }{}
\DontPrintSemicolon
\SetAlgoVlined
\SetAlFnt{\small}
\SetAlCapFnt{\small}
\SetAlCapNameFnt{\small\bfseries}
\SetAlgoSkip{medskip}
\SetAlgoNlRelativeSize{-1}
\newcommand{\eps}{\varepsilon}
\newcommand{\OPT}{\operatorname{OPT}}
\newcommand{\RR}{\operatorname{RR}}
\newcommand{\Reach}{\operatorname{Reach}}
\newcommand{\E}{\mathbb E}
\newcommand{\Prob}{\mathbb P}
\newcommand{\ind}{\mathbf 1}
\newcommand{\HIT}{\mathsf{HIT}}
\newcommand{\Absorb}{\mathsf{AbsorbRR}}
\newcommand{\Select}{\mathsf{SeedSelection}}
\newcommand{\Greedy}{\mathsf{BucketGreedy}}
\newcommand{\Score}{\mathsf{ScoreRR}}
\newcommand{\Main}{\mathsf{BudgetIndependentIM}}
\newcommand{\cR}{\mathcal R}
\newcommand{\cA}{\mathcal A}

\title{Budget-Independent Influence Maximization \\ in Nearly Linear Time}
\author{Zhijie Zhang\\
Fuzhou University\\
\texttt{zzhang@fzu.edu.cn}}
\date{}
\begin{document}
\maketitle
\begin{abstract}
Influence maximization asks for $k$ seed vertices that maximize the expected spread of a diffusion process in a network. Standard near-optimal-time algorithms based on reverse-reachable sampling achieve a $(1-1/e-\eps)$ approximation, but their expected running-time bounds grow linearly with the seed budget $k$. We remove this multiplicative dependence: for the independent cascade model, our algorithm succeeds with probability at least $1-\delta$ in $O((m+n)\eps^{-3}\log(2n/\delta))$ expected time. The result extends to triggering models with explicitly charged local sampling costs.

We reserve $O(\eps k)$ seed positions for cost-weighted random vertices, allowing reverse-reachable searches to stop as soon as they encounter a reserved seed. An independent sample-count estimation phase uses a statistic that also controls the expected search cost. Matching these quantities eliminates the multiplicative dependence on $k$ while preserving the approximation guarantee.
\end{abstract}

\section{Introduction}
\label{sec:intro}
Influence maximization asks how a limited number of initial activations can reach a large population through network diffusion. It captures the possibility of reaching people beyond those directly recruited, a motivation for network-aware viral marketing~\citep{DR01,RD02} and peer-led health-information outreach~\citep{Wilder18}. Because different seeds may reach overlapping audiences, the value of one seed depends on the others selected. The task is therefore to allocate a limited seeding budget jointly, rather than simply rank vertices by individual prominence~\citep{KKT03,KKT15}.

Formally, the input consists of a directed graph $G=(V,E)$ with $n$ vertices and $m$ edges, a stochastic diffusion model, and a seed budget $k$. For a seed set $S$, let $\sigma(S)$ denote the expected number of activated vertices; the goal is to maximize $\sigma(S)$ subject to $|S|\le k$. Under the independent cascade (IC) and triggering models, the objective is monotone and submodular, although the optimization problem is NP-hard~\citep{KKT03,KKT15}. The classical greedy analysis yields a $1-1/e$ approximation when exact marginal values are available~\citep{NWF78}.

The challenge is to obtain a comparable guarantee efficiently. Repeated marginal-influence evaluations can require many reachability computations over random graph realizations. Reverse influence sampling addresses repeated evaluation by reducing influence maximization to maximum coverage on sampled reverse-reachable (RR) sets~\citep{BBCL14,TIM14}. Its standard running-time bounds, however, retain a multiplicative dependence on the seed budget. For instance, the expected time of TIM~\citep{TIM14} is
\begin{equation}
 O\!\left((m+n)\eps^{-2}
       \left[k\log n+\log\frac1\delta\right]\right)
 \label{eq:tim-bound}
\end{equation}
for a $(1-1/e-\eps)$ approximation with failure probability $\delta$. The revised full analysis of~\citet{BBCL16} also contains a linear dependence on $k$. These bounds are nearly linear for fixed $k$, but can become polynomially larger when the budget grows with the network. For example, with fixed accuracy, inverse-polynomial failure probability, and $k=\lfloor\sqrt n\rfloor$, the bound in~\eqref{eq:tim-bound} becomes $\widetilde O((m+n)\sqrt n)$ rather than $\widetilde O(m+n)$. Eliminating the budget factor would allow larger seed sets without this additional polynomial cost.

We establish a nearly linear expected-time guarantee for every input budget by reducing the cost of generating RR sets. We reserve $O(\eps k)$ seed positions for vertices drawn independently in proportion to their reverse-search costs and require the final solution to contain them. A reverse search can then stop as soon as it hits a reserved seed, because its sample is already covered by the eventual output. Whereas the Hit-and-Stop (HIST) algorithm constructs its stopping set through empirical influence optimization and validation~\citep{GWWC20,GWWLT22}, our method uses independent cost-weighted sampling to bound search cost directly. The approximation guarantee relies only on the reserved set being small, not on its influence.

To determine the number of RR samples, we use an independent \emph{sample-count estimation phase} based on a bounded measure of reverse-search cost. Its mean appears inversely in the sample-count bound and linearly in the expected cost bound for each stopped search, so it cancels when the two bounds are combined. Together with the saving from the reserved seeds, this yields the following guarantee.

\begin{theorem}[Influence maximization in nearly linear time]
\label{thm:main-ic}
In the adjacency-list model, for every IC network, $1\le k\le n$, $0<\eps\le1/4$, and $0<\delta<1/2$, there is a randomized algorithm returning $S\subseteq V$ with $|S|\le k$ such that
\begin{equation}
 \Prob\!\left[\sigma(S)\ge(1-1/e-\eps)\OPT_k\right]\ge1-\delta,
 \qquad \OPT_k=\max_{|T|\le k}\sigma(T).
 \label{eq:main-quality}
\end{equation}
Its expected running time is
\begin{equation}
 O\!\left(\frac{m+n}{\eps^3}
       \left[\ln\frac{en}{k}+\frac1k\ln\frac4\delta\right]\right)
 \subseteq O\!\left((m+n)\eps^{-3}\log\frac{2n}{\delta}\right).
 \label{eq:main-time}
\end{equation}
\end{theorem}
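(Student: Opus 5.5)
The plan is to build the algorithm $\Main$ from three phases and analyse quality and cost separately. \emph{Phase 1 (reserved seeds):} let $r=\lceil \eps k/2\rceil$. Draw $r$ vertices $A=\{a_1,\dots,a_r\}$ independently, each with probability proportional to its reverse-search cost $c(v)=1+\deg^{-}(v)$ (total weight $m+n$). These seeds are committed to the output. \emph{Phase 2 (sample-count estimation):} generate independent RR sets and record a bounded statistic $X\in[0,1]$ for each, namely the normalized cost of the search, $X=\min\{1,\ \mathrm{cost}/(m+n)\}$ or an equivalent quantity. This gives an estimate $\hat\mu$ of $\mu=\E[X]$, and the number of samples is then set to $\theta\approx \eps^{-2}\bigl(k\ln\frac{en}{k}+\ln\frac4\delta\bigr)\cdot \frac{1}{k\,\hat\mu}$ up to constants. \emph{Phase 3 (stopped RR generation and greedy):} generate $\theta$ RR sets, but abort a search as soon as it touches $A$, and mark that set as covered. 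Then run greedy maximum coverage for $k-r$ further seeds on the uncovered sets, using $\Greedy$ in linear time in the total set size.

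For quality, the argument has two parts. First, the $(k-r)$-budget optimum is at least $(1-r/k)\OPT_k\ge(1-\eps/2)\OPT_k$ by submodularity, and adding $A$ never hurts because of monotonicity. Coverage by $A\cup S'$ on the stopped sets equals coverage on the full RR sets, since stopped sets are hit by $A$. So the standard RIS analysis applies to the residual coverage function $F_A(S')=n\Prob[\RR\cap(A\cup S')\ne\emptyset]$, and greedy gives $1-1/e$ relative to the empirical optimum. Second, we need concentration. Following the martingale/Chernoff argument of TIM-type methods, a union bound over $\binom{n}{k}\le (en/k)^k$ sets requires $\theta\ge c\,\eps^{-2}(k\ln\frac{en}{k}+\ln\frac4\delta)\, n/\OPT_k$. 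The key link to establish is that $n/\OPT_k$ is controlled by $1/(k\mu)$. The cost-weighted quantity $\mu$ satisfies $k\mu\lesssim \OPT_k/n$, because $k$ cost-weighted random vertices hit an RR set with probability about $k\cdot\E[\mathrm{cost}]/(m+n)$, and this is at most $\OPT_k/n$ after normalization. We must also handle the event that $\hat\mu$ underestimates $\mu$ by more than a constant factor; Phase 2 will use a stopping-rule estimator (Dagum–Karp–Luby–Ross style) with failure probability $\delta/4$.

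For running time, there are three costs. Phase 2 costs $O((m+n)\cdot \#\text{samples}\cdot \mu)=O((m+n)\eps^{-2}\ln(4/\delta))$ because the stopping rule uses $\Theta(\eps^{-2}\ln(1/\delta)/\mu)$ samples, each of expected cost $(m+n)\mu$. Bounding $k\mu$ from below is unnecessary here. Phase 3 has each stopped search cost, in expectation over $A$, at most $O((m+n)/r)$. The reason is that while a search explores, each examined unit of cost is hit by a reserved cost-weighted seed with probability about $r/(m+n)$, so the explored cost before the first hit is geometric. Combining this with the untruncated bound $(m+n)\mu$ gives a cost of roughly $\min\{(m+n)\mu,(m+n)/r\}$ per sample. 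Multiplying by $\theta$, the factor $\mu$ cancels: $\theta\cdot(m+n)\mu\cdot$ (fraction) yields $\frac{m+n}{\eps^2 k}(k\ln\frac{en}{k}+\ln\frac4\delta)$ times a $1/\eps$ factor from $r=\Theta(\eps k)$ in the truncated regime, giving the stated $\eps^{-3}$ bound. Greedy is linear in the total stored size, which is dominated by the search cost.

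The main obstacle is the cancellation step. We need a single statistic $X$ for which the RR-set expected stopped cost is $O((m+n)\,\E[X]/\eps)$ per sample, while $1/(k\E[X])$ still upper-bounds $n/\OPT_k$. I would try defining $X=\Prob_A[\text{search not yet stopped}]$-weighted cost, i.e.\ $X=1-(1-\mathrm{cost}(R)/(m+n))^{r}$ scaled by $1/r$. Then $\E[X]$ is exactly the expected stopped cost normalized. Moreover, $rX\le \Prob[A\cap R\neq\emptyset]$ holds by a coupling of the search order with seed positions, so $r\mu\le\sigma(A)/n\le \OPT_k/n$. Since $r=\Theta(\eps k)$, this supplies $1/(k\mu)\gtrsim \eps\, n/\OPT_k$, which is again consistent with the extra $1/\eps$. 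Proving the coupling inequality rigorously with dependent edge coin flips is where I expect the most work.
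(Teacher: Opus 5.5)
\emph{Overall verdict: there is a genuine gap.} The step you flag as the main obstacle is where the argument breaks, and neither of your candidate statistics closes it. The missing idea is truncation at scale $C/k$.

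\emph{Both candidates fail on a simple example.} Take a bidirected cycle with all $p_{uv}=1$. Every RR set is $V$ and $w(R)=C=m+n$.
Your first statistic $\min\{1,\mathrm{cost}/(m+n)\}$ is just $w(R)/C$; the cap is vacuous since $w(R)\le C$. The link $k\mu\lesssim\OPT_k/n$ is false. Indeed, $k$ cost-weighted draws hit $R$ with probability $1-(1-w(R)/C)^k$, which saturates and is only at least $\frac12\min\{1,kw(R)/C\}$. In the example $k\mu=k$ while $\OPT_k/n=1$, so $\theta\approx\eps^{-2}H/(k\hat\mu)$ is a factor $k$ below what your union bound requires.
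Your second statistic $X=\frac1r[1-(1-w(R)/C)^r]$ does satisfy both links. The bound $r\mu=\E_A[\sigma(A)]/n\le\OPT_k/n$ is immediate by conditioning on $R$: each draw lands in $R$ with probability $w(R)/C$, so no delicate coupling is needed. The prefix integral bounds the expected stopped cost by $\frac{C}{r+1}\bigl[1-(1-w(R)/C)^{r+1}\bigr]\le CX$. However, $X$ is strictly increasing in $w(R)$ all the way to $C$. Observing it exactly therefore requires completing every Phase~2 search, at cost $\Theta(w(R))$ rather than $(m+n)\mu$. In the example $X\equiv1/r$, so even a constant-accuracy accumulated-sum rule needs $\Omega(r\ln(1/\delta))$ observations of cost $\Theta(m+n)$ each. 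That is $\Omega(\eps k(m+n)\ln(1/\delta))$ work, so the budget factor reappears in Phase~2.

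\emph{How the paper resolves it.} The paper uses $X(R)=\min\{1,kw(R)/C\}$, which gives three things at once.
\begin{enumerate}
\item A reverse search that stops once the popped cost reaches $C/k$, before expanding the crossing vertex, computes $X(R)$ exactly in $O((C/k)X(R))$ time. Hence the estimation phase costs $O(C\Gamma/k)$ deterministically, with $\Gamma=\lceil8\ln(2/\delta)\rceil$.
\item The inequality $1-(1-x)^k\ge\frac12\min\{1,kx\}$ gives $n\mu/2\le\OPT_k$.
\item Since $b+1\le k$, the inequality $1-(1-x)^{b+1}\le\min\{1,kx\}$ turns the prefix-integral bound into $O(C\mu/(b+1))$ per absorbed search. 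So $\mu$ cancels against $\E[N]=O(H/(\eps^2\mu))$.
\end{enumerate}

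\emph{Three further problems.}
\begin{enumerate}
\item With $r=\lceil\eps k/2\rceil$ you get $r=1$ whenever $k<2/\eps$, so $r/k$ can be as large as $1$. For $k=1$ the output is a single random vertex, and $1-r/k\ge1-\eps/2$ fails. You need a floor, as in $b=\lfloor\eps k/3\rfloor$. Cost bounds must then be stated in terms of $b+1$ so that $b=0$ is covered; the prefix integral gives $C/(b+1)$ directly, unlike your geometric $(m+n)/r$.
\item Your Phase~2 bound $O((m+n)\eps^{-2}\ln(4/\delta))$, even where valid, exceeds the $\frac{m+n}{\eps^3k}\ln\frac4\delta$ term of the statement by a factor $\eps k$. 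At best it yields only the weaker form $O((m+n)\eps^{-3}\log(2n/\delta))$. Constant relative accuracy suffices in that phase, and the capped search supplies the missing $1/k$.
\item Since $\theta\propto1/\hat\mu$, the event that threatens the approximation is $\hat\mu$ \emph{over}estimating $\mu$; the paper bounds $\Prob[\Gamma/(2\tau)>\mu]\le e^{-\Gamma/8}$. Underestimation only costs time and is controlled in expectation via $\E\tau\le(\Gamma+1)/\mu$.
\end{enumerate}

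Your quality argument is otherwise fine: residual greedy from $A$, the $(k-r)$-optimum bound, and the union bound over $(en/k)^k$ sets.
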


Compared with~\eqref{eq:tim-bound}, the theorem removes the multiplicative budget factor while changing the precision dependence from $\eps^{-2}$ to $\eps^{-3}$. We develop the analysis for IC and extend it to triggering models with explicit local sampling-cost assumptions in Section~\ref{sec:trigger-extension}. A time-capped version gives a deterministic running-time limit with one additional logarithmic repetition factor.

\paragraph{Parallel work and use of AI.}
In parallel work, \citet{Seddighin26} obtains a $(1-1/e-\eps)$-approximation for influence maximization in $\widetilde O_{\eps}(m+n)$ time, likewise removing the multiplicative dependence on the seed budget. He reports having circulated a draft of this result to researchers on June~3, 2026. We credit Seddighin with the first nearly linear-time, budget-independent algorithm achieving this approximation guarantee for arbitrary input budgets. The two projects were developed independently: neither author was aware of the other's work while developing the respective initial manuscripts.

Both works use capped graph traversals to control the cost of reverse sampling, but in different ways. Seddighin develops a phased greedy algorithm that repeatedly samples the residual coverage problem and uses truncated traversals and validation to implement the required sampling. Our algorithm instead fixes $O(\eps k)$ cost-weighted random seeds, stops final reverse searches when they hit this set, and performs one coverage optimization after an independent sample-count estimation phase. In that phase, a capped traversal evaluates a bounded search-cost statistic exactly; a prefix-integral bound connects the same statistic to the expected cost of the final searches.

The roles of AI also differ. In his disclosure, \citet{Seddighin26} states that he developed all algorithms and proofs himself and used GPT and Claude only for proofreading and polishing after writing an initial draft. In contrast, this work was developed through iterative interactions with GPT-6 Astra ultra, including assistance in generating and refining algorithmic ideas and proofs, as well as preparing the manuscript. The author has verified the results and takes full responsibility for the paper's content and correctness.

\subsection{Related Work}
\label{sec:related}
\paragraph{Origins and approximation framework.}
\citet{DR01,RD02} introduced network-aware methods for viral marketing. \citet{KKT03,KKT15} established the combinatorial optimization framework for IC, linear threshold, and more general triggering diffusion, including hardness and submodularity-based approximation results. The underlying cardinality-constrained greedy guarantee is due to~\citet{NWF78}.

\paragraph{Simulation-based accelerations and heuristics.}
CELF, introduced by~\citet{CELF07}, uses lazy marginal-gain evaluation to accelerate greedy selection; CELF++ of~\citet{CELF11} further reduces repeated work. These are accelerations of greedy evaluation, rather than substitutes for its objective. A separate line develops fast heuristic influence surrogates. Representative examples are the degree-discount heuristics of~\citet{CWY09}, the local-arborescence methods of~\citet{CWW10}, and IRIE of~\citet{IRIE12}, which combines influence ranking with influence estimation. These methods emphasize practical scalability; they do not by themselves provide the uniform approximation-and-time guarantee proved here.

\paragraph{Reverse sampling and provable scalability.}
\citet{BBCL14,BBCL16} developed reverse influence sampling, and TIM/TIM+~\citep{TIM14} separates sample-size estimation from coverage optimization. Subsequent work includes martingale-based estimation in IMM~\citep{IMM15}, online quality assessment in OPIM and OPIM-C~\citep{OPIM18}, and improved RR-set generation and tighter complexity bounds by~\citet{GWWC20,GWWLT22}. The latter works improve the graph-dependent sampling cost, including bounds that retain a factor of $k$ under restrictions on incoming probability sums. \citet{Lakshmanan25} also studies budget-independent influence maximization, but the guarantee in that paper's Theorem~1 assumes $k\eps<1$. Our guarantee applies to every $1\le k\le n$ without this restriction.

\paragraph{Comparison with previous approaches.}
The most direct precursor to our stopping rule is HIST, introduced by~\citet[Section~4]{GWWC20} and developed further in~\citet[Section~5]{GWWLT22}. HIST first selects a \emph{sentinel set}, meaning a subset of the eventual seeds used to stop subsequent reverse searches. It then generates RR samples that stop upon hitting this set and selects the remaining seeds. Its sentinel-selection phase uses empirical greedy optimization and sample-based validation to establish an influence guarantee for the selected prefix.

Our method replaces this influence-optimized sentinel construction with $O(\eps k)$ independent cost-weighted draws. The resulting set need not capture any prescribed fraction of the optimal influence: its size controls the approximation loss, while its distribution controls the cost of reverse searches. A prefix-integral argument gives a cost bound inversely proportional to the number of draws plus one. Combining this bound with independent sample-count estimation yields~\eqref{eq:main-time}.

TIM~\citep[Section~3.2]{TIM14} uses indegree-weighted random seed sets to define an auxiliary influence quantity and controls the inverse moment of the resulting lower-bound estimate. Our cost-weighted draws instead serve as actual seeds that reduce subsequent search work. For sample-count estimation, we apply the accumulated-sum method for bounded observations~\citep{DKLR00} to a capped search-cost statistic, rather than explicitly estimating an influence lower bound. This statistic can be evaluated exactly without always completing the RR search. Its mean enters the sample-count bound inversely and the per-sample cost bound linearly; canceling these factors links parameter estimation directly to the computational saving from the prescribed seeds.

\subsection{Technique Overview}
\label{sec:overview}
We explain the construction in the order used in the technical sections: first the seed-selection algorithm for a supplied number of samples, and then the phase that determines that number.

\paragraph{From reverse samples to prescribed seeds.}
A reverse-reachable (RR) set consists of the vertices that can reach a uniformly selected target in one random live-edge graph. A seed set intersects such a sample with probability equal to its influence divided by the number of vertices. Maximizing the fraction of intersected samples is therefore an empirical maximum-coverage problem. Using inverted incidence lists and integer buckets, all greedy selections take time linear in the explicit coverage instance. The difficulty is generating that instance cheaply enough.

As in sentinel-based sampling~\citep{GWWC20,GWWLT22}, we require the output to contain a set $B$ before generating the optimization samples, and stop a reverse search as soon as it hits $B$. Unlike selecting $B$ by empirical influence maximization, we draw at most $\eps k/3$ vertices independently, with probabilities proportional to one plus their indegrees. These weights reflect the cost of expanding a vertex. The membership test precedes the incoming-edge scan, so a costly prescribed vertex stops the search without incurring its own expansion cost. Searches avoiding $B$ are completed, and greedy fills the remaining positions using those samples. Prescribing any set this small loses only $O(\eps)$ in the approximation ratio; no influence-quality certificate for $B$ is needed.

\paragraph{Why the budget factor disappears.}
Our sufficient sample bound still contains a factor of $k$, because we require accuracy simultaneously over all feasible seed sets. We remove this factor from the \emph{total work}, not from that accuracy requirement. Under cost-weighted sampling of $B$, a more expensive search prefix is more likely to contain a prescribed seed. Integrating the probability of avoiding each prefix yields a search-cost bound inversely proportional to the number of prescribed draws plus one. With our rounded $\eps$-fraction of the budget, this contributes $O(1/(\eps k))$. Multiplication with the statistical factor $k$ leaves an additional $1/\eps$, rather than a polynomial budget dependence.

The sample-count estimation phase makes this cancellation valid without knowing the optimum. It measures a bounded statistic of reverse-search cost whose mean controls both the number of final samples and their expected generation cost. A smaller mean allows cheaper searches but calls for more samples; a larger mean has the opposite effect. The unknown mean cancels in their product. Crucially, the preliminary observations are independent of both $B$ and the final sample stream, so conditioning on the chosen sample count preserves the search-cost bound. Accuracy is proved uniformly over all feasible sets before applying it to the data-dependent greedy output.

\paragraph{Computing the statistic without completing the search.}
The statistic stops increasing once the total cost of reached vertices reaches a threshold equal to a $1/k$ fraction of the graph's total search cost. Its value is then known exactly, so the search can stop before expanding the vertex that crosses the threshold. Repeating these computations until their values sum to a confidence threshold determines the final sample count. Each computation is charged in proportion to the value it contributes, giving a worst-case bound on the work of this preliminary phase. This second stopping rule is distinct from hitting $B$: it determines only a scalar needed for sampling, rather than the membership of a complete RR set. Both stopping rules are exact and introduce no additional approximation error.

\subsection{Paper Structure}

Section~\ref{sec:model} defines the input and diffusion models and collects the elementary inequalities. Section~\ref{sec:selection} presents RR sampling, prescribed-seed selection, and approximation for a sufficient sample count. Section~\ref{sec:calibration} determines the count and proves the complete running-time guarantee, including a time-capped implementation. Section~\ref{sec:trigger-extension} extends the analysis beyond IC. Section~\ref{sec:conclusion} concludes.

\section{Preliminaries}
\label{sec:model}
The input to influence maximization consists of a finite directed graph $G=(V,E)$, a specified stochastic diffusion model on $G$, and an integer seed budget $1\le k\le |V|$. We write $n=|V|\ge1$, $m=|E|$, $N^-(v)=\{u:(u,v)\in E\}$, and $\deg^-(v)=|N^-(v)|$. Every vertex is eligible as a seed and every seed has unit cost. A diffusion model specifies how activation spreads from an initially active seed set. For a fixed $S\subseteq V$, let $A(S)$ be the random final active set under that model, including the seeds themselves. The influence-maximization problem is to select a set of at most $k$ seeds maximizing expected final spread:
\begin{equation}
 \sigma(S)=\E[|A(S)|],\qquad
 \OPT_k=\max_{S\subseteq V:\,|S|\le k}\sigma(S).
 \label{eq:objective}
\end{equation}
Since seeds themselves count as active,
\begin{equation}
 k\le\OPT_k\le n.
 \label{eq:opt-range}
\end{equation}

We seek a randomized approximation algorithm that, given accuracy $0<\eps\le1/4$ and failure parameter $0<\delta<1/2$, returns a feasible set whose influence is at least $(1-1/e-\eps)\OPT_k$ with probability at least $1-\delta$. This success probability refers to the algorithm's random choices; the expectation defining $\sigma$ is over diffusion. We first specify the IC model used in the main analysis, with the graph stored in adjacency lists, and then the triggering model used in the extension.

\subsection{Independent cascades and live-edge graphs}
An IC network specifies, in addition to $G$, a probability $p_{uv}\in[0,1]$ for every edge $(u,v)\in E$. At time zero, exactly the vertices in a chosen set $S$ are active. If $u$ becomes active for the first time at round $t$, then in round $t+1$ it has one opportunity to activate each still-inactive out-neighbor $v$, succeeding with probability $p_{uv}$. All such edge trials are mutually independent. Each opportunity is used at most once, and active vertices remain active. The process stops when a round activates no new vertex. Since every nonterminal round activates at least one previously inactive vertex, it stabilizes after at most $n$ rounds. Its final active set is denoted by $A(S)$.

A \emph{live-edge graph} for this IC network is the random directed subgraph $L=(V,E_L)$ obtained by drawing independent variables
\[
 Z_{uv}\sim\operatorname{Bernoulli}(p_{uv}),\qquad (u,v)\in E,
 \qquad E_L=\{(u,v)\in E:Z_{uv}=1\}.
\]
An edge in $E_L$ is called live. For any fixed directed graph $L$, define
\[
 \Reach_L(S)=\{v\in V:\text{there is a directed path in $L$ from some $s\in S$ to $v$}\}.
\]
Paths of length zero are allowed, so $S\subseteq\Reach_L(S)$.

Under the standard live-edge coupling for IC~\citep[Section~4.2]{KKT15}, $A(S)=\Reach_L(S)$, and hence $\sigma(S)=\E_L[|\Reach_L(S)|]$.

\subsection{The triggering model}
\begin{definition}[Triggering diffusion{~\citep[Section~4.1]{KKT15}}]
\label{def:trigger}
For each vertex $v$, a triggering model specifies a probability distribution $D_v$ on $2^{N^-(v)}$. Independently across vertices, draw a set $T_v\sim D_v$ once at the beginning of a diffusion. With $A_0=S$, the active sets evolve as
\[
 A_{t+1}=A_t\cup\{v\in V\setminus A_t:T_v\cap A_t\ne\varnothing\}.
\]
The corresponding live-edge graph has vertex set $V$ and edge set
\[
 E_L=\{(u,v)\in E:u\in T_v\}.
\]
\end{definition}
For fixed triggering sets, induction on $t$ shows that $A_t$ contains exactly the vertices reachable from $S$ by live paths of at most $t$ edges. Thus the final active set is $\Reach_L(S)$, and the objective~\eqref{eq:objective} applies. Dependence among incoming edges of one vertex is allowed; triggering sets at different vertices are independent. IC is the special case in which each $u\in N^-(v)$ is included independently with probability $p_{uv}$.

\subsection{Elementary inequalities}
\label{sec:elementary}
We use a standard additive form of the Chernoff bound. It follows by combining the two one-sided inequalities stated in~\citet[Lemma~1]{TIM14}; we quote it rather than reprove the concentration inequality.

\begin{lemma}[Chernoff bound{~\citep[Lemma~1]{TIM14}}]
\label{lem:chernoff}
Let $Y_1,\ldots,Y_N$ be independent and identically distributed random variables in $[0,1]$ with mean $p$. For every $x>0$,
\begin{equation}
 \Prob\!\left[\left|\frac1N\sum_{i=1}^NY_i-p\right|\ge x\right]
 \le 2\exp\!\left(-\frac{Nx^2}{2p+x}\right).
 \label{eq:chernoff}
\end{equation}
\end{lemma}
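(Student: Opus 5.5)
The plan is to derive the two-sided additive bound~\eqref{eq:chernoff} from the two standard one-sided multiplicative Chernoff bounds, via the substitution $\delta'=x/p$ and a union bound. Write $X=\sum_{i=1}^N Y_i$, so that $\E[X]=Np$.

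First dispose of the degenerate case $p=0$. Then every $Y_i=0$ almost surely, so $\frac1N X=p$ almost surely. For $x>0$ the left-hand side of~\eqref{eq:chernoff} is therefore $0$, and the bound holds trivially. Assume $p>0$ from now on.

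Next, bound the moment generating function. For $y\in[0,1]$ and any real $\lambda$, convexity gives
\[
 e^{\lambda y}\le 1-y+ye^{\lambda}.
\]
Taking expectations yields $\E[e^{\lambda Y_i}]\le 1+p(e^\lambda-1)\le \exp\bigl(p(e^\lambda-1)\bigr)$. By independence,
\[
 \E[e^{\lambda X}]\le\exp\bigl(Np(e^\lambda-1)\bigr).
\]
This holds for all real $\lambda$, so it serves both tails.

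\textbf{Upper tail.} Apply Markov's inequality to $e^{\lambda X}$ with $\lambda=\ln(1+\delta')$. This gives
\[
 \Prob[X\ge(1+\delta')Np]\le\exp\bigl(-Np\,h(\delta')\bigr),\qquad h(t)=(1+t)\ln(1+t)-t.
\]
I will then use the elementary inequality $h(t)\ge t^2/(2+t)$ for $t\ge0$. To prove it, compare the two sides at $t=0$, where both vanish, together with their first derivatives. The derivative comparison reduces to $\ln(1+t)\ge 2t/(2+t)$, which in turn follows by differentiating once more. This inequality is the only step needing real care; it is the main obstacle, though a routine one.

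\textbf{Lower tail.} Use $\lambda=\ln(1-\delta')$ for $0<\delta'<1$. The same Markov argument gives the exponent $Np\bigl[(1-\delta')\ln(1-\delta')+\delta'\bigr]\ge Np\,\delta'^2/2$. If $\delta'\ge1$, the event $X\le(1-\delta')Np$ forces $X\le 0$ only in the boundary case $\delta'=1$. That case is handled by the limit $\lambda\to-\infty$, which gives probability at most $(1-p)^N\le e^{-Np}$. For $\delta'>1$ the event is empty because $X\ge 0$.

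\textbf{Combining the tails.} Substitute $\delta'=x/p$. The upper-tail exponent becomes
\[
 Np\cdot\frac{x^2/p^2}{2+x/p}=\frac{Nx^2}{2p+x}.
\]
The lower-tail exponent becomes $Nx^2/(2p)$, which is at least $Nx^2/(2p+x)$. In the boundary case $\delta'=1$ we have $x=p$, and then $Np\ge Np^2/(3p)=Nx^2/(2p+x)$, so that case is covered as well. Finally, the event $\bigl|\frac1N X-p\bigr|\ge x$ is the union of the two one-sided events. A union bound therefore gives the factor $2$ in~\eqref{eq:chernoff}, which completes the proof. Alternatively, since the paper cites this lemma from \citet[Lemma~1]{TIM14}, one may simply invoke the two one-sided bounds stated there and perform only this final substitution and union bound.
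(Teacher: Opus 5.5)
Your proposal is correct and follows the route the paper indicates. The paper does not reprove the lemma; it substitutes $\delta=x/p$ into the two one-sided bounds of \citet[Lemma~1]{TIM14} (exponents $\delta^2/(2+\delta)$ and $\delta^2/2$), takes a union bound, and notes that $p=0$ is trivial, which is exactly your final step. Your MGF derivation just makes those cited bounds self-contained.

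One wording point: the derivative of $t^2/(2+t)$ is $(t^2+4t)/(2+t)^2$, not $2t/(2+t)$. So $\ln(1+t)\ge 2t/(2+t)$ is a sufficient condition rather than an equivalent reduction, via $(t^2+4t)/(2+t)^2\le 2t/(2+t)$, and that is all you need.
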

The statement includes Bernoulli observations. When $p=0$, all observations are zero almost surely and the inequality is immediate.

The next elementary comparison connects a probability of at least one success with a linear quantity truncated at one. It will be used when selecting the sample count.
\begin{lemma}[Truncated linear comparison]
\label{lem:scalar}
For every $x\in[0,1]$ and integers $1\le s\le k$,
\begin{equation}
 1-(1-x)^s\le\min\{1,kx\},\qquad
 1-(1-x)^k\ge\tfrac12\min\{1,kx\}.
 \label{eq:scalar}
\end{equation}
\end{lemma}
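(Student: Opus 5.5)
We prove the two inequalities in \eqref{eq:scalar} separately. Both are elementary one-variable facts about $f_s(x)=1-(1-x)^s$ on $[0,1]$, and both follow from Bernoulli's inequality together with the bound $1-x\le e^{-x}$.

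For the upper bound, note first that $1-(1-x)^s\le1$ trivially, since $(1-x)^s\ge0$ for $x\in[0,1]$. It remains to show $1-(1-x)^s\le kx$. Bernoulli's inequality gives $(1-x)^s\ge1-sx$, valid because $-x\ge-1$ and $s\ge1$ is an integer. Hence
\[
1-(1-x)^s\le sx\le kx ,
\]
using $s\le k$. Taking the smaller of the two bounds yields $1-(1-x)^s\le\min\{1,kx\}$.

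For the lower bound, set $y=kx\ge0$. Since $1-x\le e^{-x}$ and both sides are nonnegative, we have $(1-x)^k\le e^{-kx}$, so
\[
1-(1-x)^k\ge1-e^{-y}.
\]
We then split into two cases.
\begin{itemize}
\item If $y\ge1$, then $1-e^{-y}\ge1-e^{-1}>\tfrac12=\tfrac12\min\{1,y\}$.
\item If $0\le y\le1$, we use concavity: $g(y)=1-e^{-y}$ is concave with $g(0)=0$, so on $[0,1]$ it lies above its chord, giving $g(y)\ge(1-e^{-1})\,y\ge\tfrac12 y=\tfrac12\min\{1,y\}$.
\end{itemize}
In both cases $1-(1-x)^k\ge\tfrac12\min\{1,kx\}$.

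The only delicate point is the constant $\tfrac12$ in the second inequality. The chord argument shows it holds with room to spare, since $1-e^{-1}\approx0.632$. No other step poses an obstacle; the only care needed is that Bernoulli's inequality requires $x\le1$, which the hypothesis guarantees.
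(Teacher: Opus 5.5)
Your proof is correct and follows essentially the same route as the paper: Bernoulli's inequality with $s\le k$ and the trivial bound one for the first inequality, then $(1-x)^k\le e^{-kx}$ together with concavity of $1-e^{-y}$ on $[0,1]$ and monotonicity for $y\ge1$ for the second. The only difference is that you write out the two cases separately, whereas the paper combines them into the single bound $1-e^{-y}\ge(1-e^{-1})\min\{1,y\}$.
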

\begin{proof}
The first inequality follows from $1-(1-x)^s\le sx\le kx$ and the upper bound one. For the second, $(1-x)^k\le e^{-kx}$. Concavity of $1-e^{-y}$ on $[0,1]$ and monotonicity for $y\ge1$ give
\[
 1-e^{-y}\ge(1-e^{-1})\min\{1,y\}
            \ge\tfrac12\min\{1,y\}\qquad(y\ge0).
\]
Set $y=kx$ to complete the proof.
\end{proof}

\section{Reverse-Reachable Sampling with Prescribed Seeds}
\label{sec:selection}
This section develops the seed-selection phase. We first recall why RR sampling yields an empirical coverage problem, then explain how a small prescribed seed set permits exact early termination. We analyze approximation assuming a sufficiently large sample count $N$. The only unresolved parameter at the end of the section is how to choose $N$ without knowing $\OPT_k$; Section~\ref{sec:calibration} supplies it.

\subsection{The RR-set approach}
Sample a live-edge realization $L$ and an independent uniform target $t\in V$. Its reverse-reachable set is
\begin{equation}
 R=\RR(L,t)=\{u\in V:u\leadsto t\text{ in }L\}.
 \label{eq:rr}
\end{equation}
A reverse breadth-first search generates $R$ by expanding incoming live edges. The target itself belongs to $R$, so every RR set is nonempty. A seed set intersects $R$ exactly when it can activate $t$ in this realization. This observation underlies reverse influence sampling~\citep{BBCL14,TIM14}. The following identity is due to~\citet{BBCL16}; we include a proof for completeness.

\begin{lemma}[RR hitting identity{~\citep[Observation~3.2]{BBCL16}}]
\label{lem:rr}
For every fixed $S\subseteq V$,
\begin{equation}
 \Prob[R\cap S\ne\varnothing]=\frac{\sigma(S)}n.
 \label{eq:rr-identity}
\end{equation}
\end{lemma}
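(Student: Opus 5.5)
The plan is to condition on the live-edge realization and use the independence of the target. Fix $S\subseteq V$. By construction, $L$ and the target $t$ are independent, and $t$ is uniform on $V$.

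The first step is a pointwise equivalence. For any fixed directed graph $L$ and vertex $t$, we have $R\cap S\ne\varnothing$ exactly when some $s\in S$ satisfies $s\leadsto t$ in $L$. By the definition of $\Reach_L$, this holds exactly when $t\in\Reach_L(S)$. This is just a restatement of~\eqref{eq:rr}, with paths of length zero included, so that $t\in S$ gives $t\in R\cap S$.

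The second step conditions on $L$. Since $t$ is uniform and independent of $L$,
\[
\Prob[R\cap S\ne\varnothing\mid L]=\Prob[t\in\Reach_L(S)\mid L]=\frac{|\Reach_L(S)|}{n}.
\]

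The third step takes expectation over $L$ and applies the live-edge coupling. This gives $\E_L[|\Reach_L(S)|]/n=\sigma(S)/n$. For IC the coupling is the one cited from~\citet{KKT15}; for triggering models it is the identity $A(S)=\Reach_L(S)$ established after Definition~\ref{def:trigger}.

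There is no real obstacle here. The only point needing care is to state explicitly that $t$ is drawn independently of $L$, which justifies the conditional computation. The remaining details are length-zero paths and the fact that the graph is finite, so all expectations are finite sums.
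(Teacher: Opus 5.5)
Your proof is correct and follows the paper's argument: you condition on $L$, use that $R\cap S\ne\varnothing$ exactly when $t\in\Reach_L(S)$, so the conditional probability is $|\Reach_L(S)|/n$, and then average over $L$ using the live-edge coupling. Your explicit remarks that $t$ is independent of $L$ and that length-zero paths are allowed fill in details the paper leaves implicit.
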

\begin{proof}
Condition on $L$. The event $R\cap S\ne\varnothing$ is equivalent to $t\in\Reach_L(S)$. Its conditional probability is $|\Reach_L(S)|/n$ because $t$ is uniform. Taking expectation over $L$ proves the claim.
\end{proof}

For $N$ independent RR samples $R_1,\ldots,R_N$, define
\begin{equation}
 \widehat\sigma_N(T)=\frac nN\sum_{i=1}^{N}
              \ind\{R_i\cap T\ne\varnothing\}.
 \label{eq:empirical}
\end{equation}
The ordinary RR algorithm generates the samples, views them as hyperedges, and greedily selects $k$ vertices to cover as many hyperedges as possible. Every selection covers all still-uncovered samples containing the chosen vertex. With sufficiently many samples, approximate maximization of this empirical objective implies approximate maximization of true influence.

\begin{lemma}[Coverage structure]
\label{lem:submodular}
Both $\sigma$ and every fixed empirical objective $\widehat\sigma_N$ are normalized, nonnegative, monotone, and submodular.
\end{lemma}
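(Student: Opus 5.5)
The plan is to reduce both objectives to \emph{weighted coverage functions}, which are standard examples of monotone submodular functions. Nonnegative combinations preserve all four properties, so the proof has three steps: verify them for a single coverage indicator, then pass to $\widehat\sigma_N$ by a finite sum and to $\sigma$ by an expectation.

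\textbf{The basic indicator.} Fix a nonempty set $R\subseteq V$ and let $f_R(T)=\ind\{R\cap T\ne\varnothing\}$. Then $f_R(\varnothing)=0$, so $f_R$ is normalized, and it is clearly nonnegative. It is monotone because $T\subseteq T'$ and $R\cap T\ne\varnothing$ imply $R\cap T'\ne\varnothing$. For submodularity, take $T\subseteq T'$ and $v\notin T'$. The marginal gain $f_R(T\cup\{v\})-f_R(T)$ lies in $\{0,1\}$ and equals $1$ exactly when $v\in R$ and $R\cap T=\varnothing$. If the gain at $T'$ is $1$, then $v\in R$ and $R\cap T'=\varnothing$. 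Since $T\subseteq T'$, this gives $R\cap T=\varnothing$, so the gain at $T$ is also $1$. Hence the gain at $T$ is at least the gain at $T'$, which is the diminishing-returns form of submodularity.

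\textbf{The empirical objective.} For fixed samples $R_1,\ldots,R_N$, equation~\eqref{eq:empirical} gives
\[
\widehat\sigma_N=\frac nN\sum_{i=1}^N f_{R_i},
\]
a nonnegative combination of such indicators. Each of the four properties is a linear inequality or equality in the function values, so each is preserved under nonnegative combinations.

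\textbf{The true influence.} I would not attempt a direct reachability argument. Instead I would use Lemma~\ref{lem:rr}, which gives
\[
\sigma(T)=n\,\Prob[R\cap T\ne\varnothing]=n\,\E_R[f_R(T)].
\]
This holds for every fixed $T$ with one common distribution of $R$, so $\sigma$ is a finite mixture of the $f_R$, weighted by the probabilities of the finitely many possible RR sets. The properties therefore transfer as in the empirical case. As a consistency check, $\sigma(\varnothing)=0$, because $A(\varnothing)=\Reach_L(\varnothing)=\varnothing$.

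\textbf{Expected obstacle.} I do not expect a real obstacle; the only delicate point is phrasing. We need the identity of Lemma~\ref{lem:rr} simultaneously for all $T$ under one fixed sampling distribution, rather than merely for each $T$ separately. This holds because the joint distribution of $(L,t)$ does not depend on $T$.
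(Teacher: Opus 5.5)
Your proof is correct and is essentially the paper's argument: both objectives reduce to coverage functions, and nonnegative sums and expectations preserve normalization, nonnegativity, monotonicity, and submodularity. The only variation is in how you handle $\sigma$. You write it as a mixture of RR hitting indicators via Lemma~\ref{lem:rr}. The paper instead argues forward, treating $|\Reach_L(S)|$ for each fixed live-edge graph as the size of a union of singleton reachability sets before averaging over $L$. The two decompositions agree because $|\Reach_L(S)|=\sum_{t\in V}\ind\{\RR(L,t)\cap S\ne\varnothing\}$.
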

\begin{proof}
For a fixed realization, $|\Reach_L(S)|$ is the size of the union of the singleton reachability sets of vertices in $S$. If $A\subseteq B$, the vertices newly reached by adding $v$ to $B$ form a subset of those newly reached by adding $v$ to $A$. This proves diminishing marginal gains; the other properties follow directly. Expectation preserves all these properties. Each summand in~\eqref{eq:empirical} is also a coverage indicator, and their nonnegative sum has the same properties.
\end{proof}

The sample count and the cost per sample are separate issues. Even when the explicit coverage instance can be optimized quickly, generating it may require many expensive reverse searches. Our modification changes how much of each sample must be revealed.

\subsection{Presampling seeds and stopping covered searches}
\label{sec:presample}
Instead of starting greedy with no seeds, we first reserve a small number of seed positions for randomly chosen vertices. Let $B$ be the set of distinct vertices obtained. Every eventual output must contain $B$. Consequently, a reverse search that reaches $B$ has already established that its sample is covered by every possible output. It can return a hit indicator and stop the \emph{entire search}. A search that avoids $B$ must instead return its complete RR set.

The random choice should favor vertices whose expansion is expensive. Under IC, expanding $v$ examines its incoming edges, so its cost is proportional to $1+\deg^-(v)$. We therefore use the following distribution.

\begin{definition}[IC search costs and prescribed-seed sampling]
\label{def:ic-costs}
For the input IC network, set
\begin{equation}
 c(v)=1+\deg^-(v),\qquad
 C=\sum_{v\in V}c(v)=m+n,\qquad
 \pi(v)=\frac{1+\deg^-(v)}{m+n}.
 \label{eq:costs}
\end{equation}
Choose
\begin{equation}
 b=\left\lfloor\frac{\eps k}{3}\right\rfloor
 \label{eq:b}
\end{equation}
and draw $b$ vertices independently with replacement from $\pi$. Their set of distinct values is the prescribed seed set $B$. Thus $|B|\le b$, leaving $k-|B|$ positions for greedy selection.
\end{definition}

Algorithm~\ref{alg:absorb} implements the stopped reverse search, denoted $\Absorb$. Its return value $\HIT$ records that the complete sample intersects $B$; we call this an \emph{absorbed sample}. The membership test in \hyperref[line:absorb-membership]{line~\ref*{line:absorb-membership}} precedes expansion, so a costly prescribed vertex can stop the search without having its own incoming edges scanned.

\begin{algorithm}[H]
\caption{$\Absorb(B)$: reveal only the relevant part of an RR sample}
\label{alg:absorb}
\KwIn{The IC network $(G,p)$ and a fixed prescribed set $B$}
\KwOut{$\HIT$, or a complete RR set disjoint from $B$}
Sample a uniform root $t$; initialize a FIFO queue $Q=[t]$ and discovered set $U=\{t\}$\;
\While{$Q\ne\varnothing$}{
 $v\gets\operatorname{pop}(Q)$\;
 \If{$v\in B$\nllabel{line:absorb-membership}}{\Return{$\HIT$}\Comment*[r]{Do not expand $v$}}
 \ForEach{$(u,v)\in E$, in incoming-adjacency order}{
  Draw an independent $Z_{uv}\sim\operatorname{Bernoulli}(p_{uv})$\;
  \If{$Z_{uv}=1$ and $u\notin U$}{Add $u$ to $U$ and append it to $Q$\;}
 }
}
\Return{$U$}\;
\end{algorithm}

For a supplied sample count $N$, $\Select$ (Algorithm~\ref{alg:select}) calls $\Absorb$ (Algorithm~\ref{alg:absorb}) exactly $N$ times, keeps the nonhit sets as $\cR$, and runs coverage greedy on those sets. A hit consumes one trial and is not replaced. The total trial count remains $N$, even though only a subset of the trials need stored vertex lists.

\begin{algorithm}[H]
\caption{$\Select(k,\eps,N)$: seed selection for a supplied sample count}
\label{alg:select}
\KwIn{The IC network and a sampler for $\pi$ from Definition~\ref{def:ic-costs}; $k$, $\eps$, and $N\ge1$}
\KwOut{A seed set $S$ of size $k$}
$b\gets\lfloor\eps k/3\rfloor$\;
Draw $b$ vertices independently with replacement from $\pi$; let $B$ be their set of distinct values\;
$\cR\gets$ an empty multiset\;
\For{$i=1,\ldots,N$}{
 $Y\gets\Absorb(B)$ using a fresh root and fresh independent edge trials\;
 \If{$Y\ne\HIT$}{Append $Y$ to $\cR$\;}
}
$A\gets\Greedy(\cR,B,k-|B|)$\;
\Return{$B\cup A$}\;
\end{algorithm}

\begin{lemma}[Exact compression]
\label{lem:compression}
Couple Algorithm~\ref{alg:absorb} with a complete RR sample $R$. It returns $\HIT$ if and only if $R\cap B\ne\varnothing$, and otherwise returns exactly $R$. Consequently, if $\cR$ is the retained multiset from $N$ trials, then for every $A\subseteq V\setminus B$,
\begin{equation}
 \widehat\sigma_N(B\cup A)=\frac nN
 \left(N-|\cR|+\sum_{R\in\cR}\ind\{R\cap A\ne\varnothing\}\right).
 \label{eq:compression}
\end{equation}
\end{lemma}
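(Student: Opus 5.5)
The plan is to couple a run of $\Absorb(B)$ with a complete reverse BFS on the same root and the same edge trials. Then I would show by induction on the pop order that the two searches agree up to the first moment $\Absorb$ pops a vertex of $B$. The identity~\eqref{eq:compression} then follows by sorting the $N$ trials according to whether they hit $B$.

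\emph{The coupling.} Fix the root $t$, and for each edge $(u,v)$ fix a Bernoulli variable $Z_{uv}$ in advance. Algorithm~\ref{alg:absorb} draws $Z_{uv}$ at most once, when $v$ is expanded. Each vertex is expanded at most once, because it enters $Q$ only on first discovery. So drawing lazily gives the same distribution as drawing every variable up front. Let $L$ be the live-edge graph determined by these variables, and let $R=\RR(L,t)$. The unmodified FIFO reverse search (lines without the membership test) discovers exactly $R$, by the standard correctness of BFS on the reversed live graph.

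\emph{Agreement and the two cases.} By induction on the number of pops, as long as no popped vertex lies in $B$, $\Absorb$ performs the same pops, draws and queue updates as the complete search. Every vertex of $R$ is eventually popped in the complete search. Hence if $R\cap B\ne\varnothing$, the first popped vertex of $B$ in the complete search is also popped by $\Absorb$, which returns $\HIT$ at that point. Conversely, $\Absorb$ pops only vertices in $U\subseteq R$, so returning $\HIT$ implies $R\cap B\ne\varnothing$. If $R\cap B=\varnothing$, the membership test never fires, the two runs coincide entirely, and $\Absorb$ returns $U=R$. The only point needing care is that lazily drawn variables match a fixed $L$, and that is the main (mild) obstacle.

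\emph{Deriving~\eqref{eq:compression}.} Apply the coupling to each of the $N$ trials, giving complete samples $R_1,\dots,R_N$. Consider the $N-|\cR|$ trials that returned $\HIT$. Each has $R_i\cap B\ne\varnothing$, so $\ind\{R_i\cap(B\cup A)\ne\varnothing\}=1$. Now consider a retained trial, with $R_i\in\cR$. There $R_i\cap B=\varnothing$, so $\ind\{R_i\cap(B\cup A)\ne\varnothing\}=\ind\{R_i\cap A\ne\varnothing\}$. Substituting these two cases into~\eqref{eq:empirical} with $T=B\cup A$ gives the identity. Since the $R_i$ are i.i.d.\ complete RR samples, $\widehat\sigma_N$ has exactly the distribution defined earlier.
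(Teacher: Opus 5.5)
Your proposal is correct and follows essentially the same route as the paper. Like the paper, you couple with an uninterrupted reverse BFS under a fixed root and fixed edge statuses, show that the two searches agree until the first vertex of $B$ is popped, and then split the $N$ summands of~\eqref{eq:empirical} into hits (each contributing one) and retained sets (each contributing $\ind\{R\cap A\ne\varnothing\}$); your extra remark that lazily drawn edge trials match a pre-fixed live-edge graph, because each vertex is expanded at most once, makes explicit a step the paper leaves implicit.
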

\begin{proof}
Fix the live/dead status of every edge, a root, and the incoming-adjacency order of an uninterrupted reverse breadth-first search. Until the first vertex in $B$ is popped, the interrupted and uninterrupted searches have the same queue and discovered vertices. If the complete RR set intersects $B$, the interrupted search reaches its first such vertex and returns $\HIT$. Otherwise it finishes unchanged and returns $R$.

For a hit, every set containing $B$ intersects the complete sample, so its contribution to~\eqref{eq:empirical} is one. For a nonhit, the returned complete set is disjoint from $B$, and its contribution is determined by intersection with $A$. Summing over the $N$ trials proves~\eqref{eq:compression}.
\end{proof}

The number of hits is $N-|\cR|$, a common additive term in every admissible solution's empirical value. It does not affect greedy choices. Thus discarding a hit \emph{record} is safe, but deleting that trial from the denominator or drawing until $N$ nonhits are obtained would change the sampling distribution. The set $B$ remains fixed throughout the batch; it is not enlarged by intermediate greedy choices.

\paragraph{Implementing all greedy selections in linear total time.}
\label{sec:bucket}
It is important not to introduce another factor of $k$ when optimizing the sampled instance. Index the retained sets as $R_1,\ldots,R_M$, with repeated sets assigned different indices, and let
\[
 I=\sum_{j=1}^M|R_j|,\qquad
 \mathcal I(v)=\{j:v\in R_j\}.
\]
For each unselected vertex, maintain its current number $g(v)$ of uncovered incident sets. Integer buckets indexed by $g(v)$ support constant-time decrements and extraction of a maximum-count vertex. Algorithm~\ref{alg:greedy} gives the full update rule.

\begin{algorithm}[H]
\caption{$\Greedy(\cR,B,q)$: linear total-time maximum-coverage greedy}
\label{alg:greedy}
\KwIn{Nonempty retained sets $\cR=(R_1,\ldots,R_M)$, all disjoint from $B$; $q\le n-|B|$}
\KwOut{An additional set $A\subseteq V\setminus B$ of size $q$}
$A\gets\varnothing$; mark vertices of $B$ selected and every retained set uncovered\;
\If{$q=0$}{\Return{$A$}\;}
Build the inverted lists $\mathcal I(v)$ and set $g(v)\gets|\mathcal I(v)|$\;
Initialize buckets $0,\ldots,M$ as doubly linked lists; insert each $v\notin B$ into bucket $g(v)$\;
$D\gets\max_{v\notin B}g(v)$\;
\For{$i=1,\ldots,q$}{
 \While{bucket $D$ is empty and $D>0$}{$D\gets D-1$\;}
 Remove a vertex $x$ from bucket $D$; mark $x$ selected; add $x$ to $A$\;
 \ForEach{$j\in\mathcal I(x)$}{
  \If{$R_j$ is uncovered}{
   Mark $R_j$ covered\;
   \ForEach{$v\in R_j$}{
    \If{$v$ is unselected}{Move $v$ to bucket $g(v)-1$; set $g(v)\gets g(v)-1$\;}
   }
  }
 }
}
\Return{$A$}\;
\end{algorithm}

\begin{lemma}[Coverage implementation]
\label{lem:bucket}
Algorithm~\ref{alg:greedy} makes exact maximum-marginal-gain selections and uses $O(n+I)$ total time and space.
\end{lemma}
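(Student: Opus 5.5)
The proof splits into correctness and cost. For correctness, I will show by induction on the iteration $i$ that, at the start of each selection, every unselected vertex $v\notin B$ sits in bucket $g(v)$. Here $g(v)$ is exactly the number of uncovered retained sets containing $v$, which is the marginal gain of $v$ in the compressed objective~\eqref{eq:compression}. The hit term $N-|\cR|$ is common to all candidates, and vertices of $B$ have zero gain. The invariant holds initially because every set starts uncovered and $g(v)=|\mathcal I(v)|$. When $x$ is selected, each newly covered $R_j\ni x$ lowers the gain of every other unselected $v\in R_j$ by exactly one. Each such $R_j$ is processed once, when it first becomes covered, so the decrements match the change in gain exactly.

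The pointer $D$ also needs an invariant: no nonempty bucket lies above $D$. This holds initially by the choice of $D$. It is preserved because $g$ values only decrease, and moving a vertex from bucket $g$ to $g-1\le D$ cannot create a nonempty bucket above $D$. Hence, after the inner while loop, bucket $D$ is the maximum nonempty bucket. If $D$ reaches $0$, every remaining vertex has gain zero and any choice is a maximizer. Since $q\le n-|B|$, some unselected vertex always exists, so removal never fails. Therefore each selection has exact maximum marginal gain.

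For the cost, building the inverted lists takes $O(n+I)$ time and space, by scanning each $R_j$ once and appending $j$ to $\mathcal I(v)$ for each $v\in R_j$. Initializing $M+1$ buckets costs $O(M+n)$, and $M\le I$ because the retained sets are nonempty. The main accounting step, and the only mild obstacle, is bounding the nested loops without a factor of $q$:
\begin{itemize}
\item The scans of $\mathcal I(x)$ over distinct selected $x$ total at most $\sum_v|\mathcal I(v)|=I$.
\item The inner scan over $v\in R_j$ runs at most once per set, because of the covered mark, so it totals at most $I$.
\item Every bucket move is an $O(1)$ doubly-linked-list operation.
\end{itemize}

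Finally, the pointer $D$ starts at most at $M$ and only decreases. So across all iterations the while loop runs at most $M+q\le I+n$ times. Summing these terms gives $O(n+I)$ time. Space is dominated by the inverted lists and the buckets, which is also $O(n+I)$.
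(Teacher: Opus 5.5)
Your proof is correct and follows essentially the same route as the paper: the invariant that $g(v)$ counts uncovered incident retained sets, which is the marginal gain up to the factor $n/N$; a once-per-set charge for covered-set scans and a once-per-selected-vertex charge for inverted lists, each totaling at most $I$; and a monotone bucket pointer bounded via $M\le I$. Your explicit no-nonempty-bucket-above-$D$ invariant and the $M+q$ count of while-loop tests spell out details the paper states more briefly.
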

\begin{proof}
Initially $g(v)$ counts its uncovered incident sets. When a previously uncovered set is first covered, precisely its unselected vertices have their counts decremented. Thus this invariant is maintained, and a vertex of maximum count has maximum marginal gain. Each bucket move costs $O(1)$ using a membership handle.

Each set is marked covered once, so the total length of all newly covered set scans is $I$. Each chosen vertex has its inverted list scanned once; the sum of those lengths is at most $I$, including entries for already covered sets. Counts never increase, so the maximum bucket pointer moves downward at most $M$ times. Since the retained sets are nonempty, $M\le I$ unless both are zero. Building the lists and buckets, maintaining flags, and making $q\le n$ selections therefore cost $O(n+I)$. When all remaining counts are zero, the zero bucket supplies arbitrary unselected vertices without a scan of the ground set. The same structures use $O(n+I)$ space.
\end{proof}

\subsection{Approximation for a sufficient sample count}
\label{sec:fixed-quality}
We first quantify the loss from prescribing $B$. This argument is deterministic and holds for every small initial set, not only for a typical random one. Randomness in $B$ is needed later for the running-time analysis.

\begin{lemma}[Greedy with prescribed seeds]
\label{lem:forced}
Let $f:2^V\to\mathbb R_{\ge0}$ be normalized, monotone, and submodular. Let $O$ be optimal under budget $k$, and let $b$ be an integer with $0\le b\le k$, and fix $B\subseteq V$ with $|B|\le b$. Starting from $B$, repeatedly take an exact maximum-marginal-gain greedy step until the set has size $k$, obtaining $S$. Then
\begin{equation}
 f(S)\ge\left(1-\frac1e\right)\left(1-\frac bk\right)f(O).
 \label{eq:forced}
\end{equation}
\end{lemma}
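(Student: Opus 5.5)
Write $S_0=B$ and let $S_1,\ldots,S_q$ be the successive greedy sets, where $q=k-|B|\ge k-b$ and $S=S_q$. The plan is the classical Nemhauser--Wolsey--Fisher argument, started from $B$ rather than from the empty set. The one extra point is to count only the $q$ greedy steps and then use $q\ge k-b$.

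First, the one-step inequality. Fix $0\le i<q$. By monotonicity and submodularity of $f$,
\[
 f(O)\le f(S_i\cup O)\le f(S_i)+\sum_{o\in O\setminus S_i}\bigl(f(S_i\cup\{o\})-f(S_i)\bigr).
\]
At most $k$ terms appear in the sum, and each is at most the maximum marginal gain $f(S_{i+1})-f(S_i)$ taken by greedy. If $O\setminus S_i=\varnothing$, the bound $f(O)\le f(S_i)$ holds directly. If greedy has to pick a vertex of zero gain, the inequality still holds because the right-hand side can only be larger. In all cases
\[
 f(O)-f(S_{i+1})\le\Bigl(1-\frac1k\Bigr)\bigl(f(O)-f(S_i)\bigr).
\]

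Second, iterate from $S_0=B$. Since $f$ is nonnegative, $f(O)-f(B)\le f(O)$, which gives
\[
 f(O)-f(S)\le\Bigl(1-\frac1k\Bigr)^{q}f(O)\le e^{-q/k}f(O)\le e^{-(1-b/k)}f(O).
\]
Hence $f(S)\ge(1-e^{-(1-\beta)})f(O)$, where $\beta=b/k\in[0,1]$. This step also handles the degenerate case where $S$ stops being able to grow: the selection is of size exactly $k$ and every step is a legitimate exact maximum-gain step.

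Finally, compare with the target bound via the scalar inequality $1-e^{-(1-\beta)}\ge(1-1/e)(1-\beta)$ for $\beta\in[0,1]$. This holds because $y\mapsto1-e^{-y}$ is concave on $[0,1]$ and vanishes at $0$, so it lies above its chord $(1-e^{-1})y$. This is the same concavity fact already used in Lemma~\ref{lem:scalar}. Taking $y=1-\beta$ yields~\eqref{eq:forced}.

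No step is a real obstacle. The only care needed is bookkeeping: the contraction factor must be $1-1/k$ with $k$ (not $q$) in the denominator, since $|O|\le k$, while the number of steps is $q\ge k-b$.
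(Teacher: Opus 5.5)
Your proof is correct and follows essentially the paper's argument: the same Nemhauser--Wolsey--Fisher contraction $f(O)-f(S_{i+1})\le(1-1/k)\bigl(f(O)-f(S_i)\bigr)$ started from $B$, followed by $(1-1/k)^q\le e^{-q/k}$, $f(B)\ge0$, and the chord inequality $1-e^{-x}\ge(1-1/e)x$ on $[0,1]$. The only cosmetic difference is that you iterate over all $k-|B|$ greedy steps and use $k-|B|\ge k-b$ in the exponent. The paper instead stops after exactly $k-b$ steps, appeals to monotonicity for the remaining ones, and treats $q=0$ separately.
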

\begin{proof}
Set $q=k-b$. The algorithm makes at least $q$ greedy steps because $|B|\le b$. Let $S_0=B$ and let $S_i$ be its set after the first $i$ such steps, for $0\le i\le q$. At a current set $S_i$, monotonicity and submodularity imply
\[
 f(O)-f(S_i)
 \le f(S_i\cup O)-f(S_i)
 \le\sum_{v\in O\setminus S_i}
       \bigl(f(S_i\cup\{v\})-f(S_i)\bigr).
\]
There are at most $k$ summands. The maximum marginal gain is therefore at least $(f(O)-f(S_i))/k$. With $D_i=f(O)-f(S_i)$, greedy gives $D_{i+1}\le(1-1/k)D_i$. For $q\ge1$, iterating yields
\begin{equation}
 f(S_q)\ge
 \left[1-\left(1-\frac1k\right)^q\right]f(O)
 +\left(1-\frac1k\right)^qf(B).
 \label{eq:greedy-recurrence}
\end{equation}
For $k>1$, $(1-1/k)^q\le e^{-q/k}$; the same upper bound holds for $k=q=1$. Since $1-e^{-x}\ge(1-1/e)x$ on $[0,1]$ and $f(B)\ge0$, we obtain $f(S_q)\ge(1-1/e)(q/k)f(O)$. Any additional greedy steps can only increase the value, so $f(S)\ge f(S_q)$ proves~\eqref{eq:forced}. If $q=0$, its right-hand side is zero and the claim follows from nonnegativity.
\end{proof}

By Lemma~\ref{lem:compression}, greedy on the retained samples is precisely greedy for the complete empirical objective starting from $B$. Lemma~\ref{lem:forced} gives
\begin{equation}
 \widehat\sigma_N(S)\ge
 \left(1-\frac1e\right)\left(1-\frac bk\right)
 \max_{|T|\le k}\widehat\sigma_N(T).
 \label{eq:emp-factor}
\end{equation}
The maximizing set on the right need not contain $B$. Although its empirical value need not be accessible from the compressed data, it is well defined by the complete sample coupling.

To transfer~\eqref{eq:emp-factor} to the true objective, we need uniform accuracy over all feasible seed sets. We use the Chernoff inequality from Lemma~\ref{lem:chernoff}.

\begin{theorem}[Sufficient samples for prescribed-seed selection]
\label{thm:fixed-N}
Let $0<\eps\le1/4$, $0<\delta<1/2$, and
\begin{equation}
 H=\left\lceil k\ln\frac{en}{k}+\ln\frac4\delta\right\rceil.
 \label{eq:H}
\end{equation}
For a fixed sample count satisfying
\begin{equation}
 N\ge\frac{27nH}{\eps^2\OPT_k},
 \label{eq:N-sufficient}
\end{equation}
Algorithm~\ref{alg:select} returns a $(1-1/e-\eps)$ approximation with probability at least $1-\delta/2$. This holds for every fixed prescribed set of size at most $b$ chosen independently of the final RR samples.
\end{theorem}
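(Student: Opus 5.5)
Since $B$ is fixed and independent of the final RR samples, Lemma~\ref{lem:compression} couples $\Select$ with $N$ complete i.i.d.\ RR samples. By~\eqref{eq:emp-factor} the output $S$ then satisfies $\widehat\sigma_N(S)\ge(1-1/e)(1-b/k)\max_{|T|\le k}\widehat\sigma_N(T)$. The plan is the standard uniform-accuracy argument with two error events, each of probability at most $\delta/4$. The first is a lower-tail failure at one fixed optimal set $O$. The second is an upper-tail failure at some feasible $T$ with small influence. Write $\theta=\OPT_k$. Since $b=\lfloor \eps k/3\rfloor$, we have $(1-b/k)\ge 1-\eps/3$. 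It then suffices to show
\[
\widehat\sigma_N(O)\ge(1-\eps/3)\theta
\]
and that every $T$ with $|T|\le k$ and $\sigma(T)<(1-1/e-\eps)\theta$ has $\widehat\sigma_N(T)<(1-1/e)(1-\eps/3)^2\theta$. Then $S$ cannot be such a bad $T$, because $\widehat\sigma_N(S)\ge (1-1/e)(1-\eps/3)\widehat\sigma_N(O)$.

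\textbf{Step 1 (optimum).} Apply Lemma~\ref{lem:chernoff} to $Y_i=\ind\{R_i\cap O\ne\varnothing\}$, which has mean $p=\theta/n$ by Lemma~\ref{lem:rr}, with $x=\eps p/3$. The exponent is $N\eps^2p/(9(2+\eps/3))\ge N\eps^2\theta/(21n)$. Using~\eqref{eq:N-sufficient} this is at least $27H/21\ge\ln(8/\delta)$, since $H\ge\ln(4/\delta)+1$. (If the two-sided constant $2$ is too wasteful, I would use the one-sided bound from \citet{TIM14} instead.) This gives failure probability at most $\delta/4$.

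\textbf{Step 2 (bad sets, union bound).} The number of sets of size at most $k$ is $\sum_{j\le k}\binom nj\le (en/k)^k$. So it suffices that each bad $T$ fails with probability at most $(\delta/4)(k/(en))^k\ge e^{-H}\cdot(\text{const})$; note $\ln((en/k)^k\cdot 4/\delta)\le H$. Since $\widehat\sigma_N(T)$ is monotone in $\sigma(T)$ stochastically, I would reduce to the worst case $\sigma(T)$ close to $(1-1/e-\eps)\theta$. The gap to the threshold $(1-1/e)(1-\eps/3)^2\theta$ is then at least
\[
\bigl[\eps-(1-1/e)(2\eps/3-\eps^2/9)\bigr]\theta\ge c\,\eps\theta,
\]
with $c\approx 0.57$. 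Lemma~\ref{lem:chernoff} with $x=c\eps\theta/n$ and $p\le\theta/n$ gives an exponent at least $Nc^2\eps^2\theta/(n(2+c\eps))$. By~\eqref{eq:N-sufficient} this is at least $27c^2H/(2.15)$, which comfortably exceeds $H+\ln 2$. Hence the union bound yields at most $\delta/4$. For $T$ with $p$ smaller the additive deviation $x$ only grows relative to $p$, so the same exponent bound holds because $2p+x$ is bounded by the worst case.

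\textbf{Main obstacle.} The delicate part is the constant bookkeeping. Specifically, the loss from prescribing $B$, the lower-tail slack at $O$, and the upper-tail slack for bad sets must together fit inside $\eps$ with $27$ as the constant. I also need to handle the factor $2$ in the two-sided Chernoff bound against $\ln(4/\delta)$ in $H$. If the margin in Step~1 is insufficient, I would instead allot the slack unevenly, say $\eps/4$ at $O$. I would also exploit the strictly positive contribution $(1-1/e)$ rather than $1$ in the greedy factor. The conditioning is then made rigorous by noting that the bound holds for every fixed $B$, so it also holds after averaging over the independent draw of $B$.
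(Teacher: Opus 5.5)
Your argument is correct, but it organizes the probability estimate differently from the paper.

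The paper establishes a single two-sided uniform event, $|\widehat\sigma_N(T)-\sigma(T)|\le\frac\eps3\OPT_k$ for every $T$ with $|T|\le k$. The deviation $x=\eps\OPT_k/(3n)$ and the bound $2p_T+x\le3\OPT_k/n$ make the exponent at least $N\eps^2\OPT_k/(27n)\ge H$. So the constant $27$ and the $\ln(4/\delta)$ inside $H$ are calibrated exactly for the two-sided union bound $2(en/k)^ke^{-H}\le\delta/2$. The chain $\sigma(S)\ge\widehat\sigma_N(S)-\frac\eps3\OPT_k\ge\cdots$ then follows, with the coefficient $(1-1/e)(1-b/k)$ simply bounded by one.

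You instead use the one-sided split familiar from TIM. You take a lower tail at the single fixed optimum $O$, which needs no union bound. You take an upper tail over the deterministic family of bad sets, and add a threshold argument showing the greedy output cannot be bad.

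Because you keep the factor $1-1/e$ on both the prescription loss and the lower-tail loss, the union-bounded deviation becomes about $0.58\,\eps\OPT_k$ rather than $\eps\OPT_k/3$. Your exponents (at least $\tfrac97H$ at $O$, about $4H$ for bad sets) therefore have substantial slack. The fallback options in your ``main obstacle'' paragraph are unnecessary, and your version would even tolerate a constant smaller than $27$. The paper's route is simpler: one symmetric event whose constant matches the statement exactly.

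Your appeal to stochastic monotonicity in Step~2 is also superfluous. Your closing observation already suffices: the fixed deviation $x=c\eps\theta/n$ works for every bad $T$, since $2p_T+x\le(2+c\eps)\theta/n$.
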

\begin{proof}
Write $\OPT=\OPT_k$. For any fixed $T$ with $|T|\le k$, the indicators in~\eqref{eq:empirical} are independent Bernoulli variables of mean $p_T=\sigma(T)/n\le\OPT/n$. Let $\alpha=\eps/3$. Lemma~\ref{lem:chernoff} gives
\begin{align}
 \Prob\bigl[|\widehat\sigma_N(T)-\sigma(T)|>\alpha\OPT\bigr]
 &\le2\exp\!\left(-\frac{N\alpha^2\OPT^2}
                         {2n\OPT+n\alpha\OPT}\right)\notag\\
 &\le2\exp\!\left(-\frac{N\eps^2\OPT}{27n}\right)
 \le2e^{-H}.
 \label{eq:one-set-tail}
\end{align}
The middle inequality uses $\alpha<1$ to bound the denominator by $3n\OPT$.

There are at most $(en/k)^k$ feasible sets. To verify this also for large $k$, set $x=k/n\le1$; then
\[
 x^k\sum_{i=0}^k\binom ni
 \le\sum_{i=0}^k\binom ni x^i
 \le(1+x)^n\le e^k.
\]
A union bound in~\eqref{eq:one-set-tail} therefore shows that, except with probability at most $2(en/k)^k e^{-H}\le\delta/2$,
\begin{equation}
 |\widehat\sigma_N(T)-\sigma(T)|\le\frac\eps3\OPT
 \quad\text{for every }T\subseteq V\text{ with }|T|\le k.
 \label{eq:uniform}
\end{equation}
On this event, take a true optimal set $O$ and apply~\eqref{eq:emp-factor}:
\begin{align}
 \sigma(S)
 &\ge\widehat\sigma_N(S)-\frac\eps3\OPT\notag\\
 &\ge\left(1-\frac1e\right)\left(1-\frac bk\right)
       \widehat\sigma_N(O)-\frac\eps3\OPT\notag\\
 &\ge\left(1-\frac1e\right)\left(1-\frac bk\right)\OPT
       -\frac{2\eps}{3}\OPT\notag\\
 &\ge\left(1-\frac1e-\eps\right)\OPT.
 \label{eq:quality-transfer}
\end{align}
The third line uses that the coefficient multiplying $\widehat\sigma_N(O)$ is at most one; the last uses $b/k\le\eps/3$. The uniform event covers the data-dependent greedy output, so that output need not be independent of the samples. The proof holds conditionally for any independent fixed $B$ of size at most $b$, and hence also for the random $B$ in Algorithm~\ref{alg:select}.
\end{proof}

This theorem separates the approximation argument from sample-count estimation. The remaining task is to meet~\eqref{eq:N-sufficient} without knowing $\OPT_k$, while controlling the work required to generate those samples.

\section{Determining the Sample Count in Nearly Linear Time}
\label{sec:calibration}
The sufficient condition~\eqref{eq:N-sufficient} involves the unknown optimum. A direct solution is to estimate that optimum or repeatedly optimize larger sample batches. We instead choose a statistic adapted to the cost of the prescribed-seed searches. Its expectation need not closely approximate $\OPT_k$: the essential requirement is that it support both a sufficient sample count and a matching upper bound on search work.

We first derive this statistic from the search-cost bound. We then show how to observe it cheaply, how a scalar stopping rule determines $N$, and why the complete algorithm has the claimed approximation and running time. The preliminary searches used to determine the sample count are independent of the prescribed seeds and the RR samples used for seed selection.

\subsection{A statistic matched to the cost of reverse searches}
\label{sec:matching}
For a complete RR set, define
\begin{equation}
 w(R)=\sum_{v\in R}c(v),\qquad
 X(R)=\min\left\{1,\frac{k\,w(R)}C\right\},\qquad
 \mu=\E_R[X(R)].
 \label{eq:score}
\end{equation}
The statistic is the normalized search cost truncated at one; we call it the \emph{score} of $R$. It is small for samples of small total local cost, and equals one when that cost reaches $C/k$. The algorithm will observe independent copies of $X(R)$ without always revealing all of $R$.

To see why this statistic is useful, consider a complete reverse search with a fixed vertex order. A costly prefix has a large probability of containing a cost-weighted prescribed seed. Therefore the probability of continuing a long search decreases with accumulated cost. Integrating this survival probability gives the following bound.

\begin{lemma}[Prefix-integral bound]
\label{lem:prefix}
Fix a complete search order $v_1,\ldots,v_r$, independent of $B$, and write $a_i=\sum_{j=1}^i c(v_j)$, $a_0=0$. Let $B$ consist of the distinct vertices in $b$ independent draws from $\pi$. Charge $c(v)$ for each vertex expanded before absorption and let $Q_B$ be the total charge. Then
\begin{equation}
 \E_B[Q_B\mid v_1,\ldots,v_r]
 \le\frac C{b+1}
       \left[1-\left(1-\frac{a_r}C\right)^{b+1}\right].
 \label{eq:prefix}
\end{equation}
\end{lemma}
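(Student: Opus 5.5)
The plan is to write $Q_B$ as a sum of per-vertex charges and bound each by its survival probability. The vertex $v_i$ is expanded, and charged $c(v_i)$, exactly when none of $v_1,\ldots,v_i$ lies in $B$. The membership test precedes expansion, so a prescribed $v_i$ is itself not charged. Hence
\[
 Q_B=\sum_{i=1}^r c(v_i)\,\ind\{B\cap\{v_1,\ldots,v_i\}=\varnothing\}.
\]
The order is fixed and independent of $B$, and the $v_j$ are distinct. Since $B$ is the set of values of $b$ i.i.d.\ draws from $\pi$, avoiding the prefix means every draw misses it. That event has probability $(1-\pi(\{v_1,\ldots,v_i\}))^b=(1-a_i/C)^b$. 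Therefore
\[
 \E_B[Q_B\mid v_1,\ldots,v_r]=\sum_{i=1}^r (a_i-a_{i-1})\left(1-\frac{a_i}{C}\right)^b .
\]

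Next, compare this sum with an integral. The function $x\mapsto(1-x/C)^b$ is nonincreasing on $[0,C]$, and $a_r\le C$. So each term $(a_i-a_{i-1})(1-a_i/C)^b$ is at most $\int_{a_{i-1}}^{a_i}(1-x/C)^b\,dx$, a right-endpoint Riemann bound. Summing over $i$ gives
\[
 \E_B[Q_B\mid v_1,\ldots,v_r]\le\int_0^{a_r}\left(1-\frac xC\right)^b dx=\frac{C}{b+1}\left[1-\left(1-\frac{a_r}C\right)^{b+1}\right],
\]
which is exactly~\eqref{eq:prefix}.

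The only point needing care is the exact event for being charged. The search stops when it pops the first prescribed vertex, without expanding it, so the survival event for $v_i$ must include $v_i$ itself; this inclusion is what makes the right-endpoint bound go the right way. The search order is treated as fixed, since it is conditioned on and independent of $B$. The case $b=0$ is consistent: the bound reduces to $a_r$.
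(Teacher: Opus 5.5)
Your proof is correct and follows essentially the same route as the paper: the same survival-probability expression $\sum_i c(v_i)(1-a_i/C)^b$, obtained from the fact that membership is tested before expansion, followed by the same right-endpoint comparison with $\int_0^{a_r}(1-x/C)^b\,dx$ via monotonicity. Your explicit remarks that the $v_j$ are distinct (so $\pi(\{v_1,\ldots,v_i\})=a_i/C$ and $a_r\le C$) spell out details the paper leaves implicit.
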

\begin{proof}
Vertex $v_i$ is expanded exactly when all $b$ draws avoid $\{v_1,\ldots,v_i\}$. The prefix includes $v_i$ because membership is tested before expansion. Hence
\begin{align*}
 \E_B[Q_B\mid v_1,\ldots,v_r]
 &=\sum_{i=1}^r c(v_i)\left(1-\frac{a_i}C\right)^b\\
 &\le\sum_{i=1}^r\int_{a_{i-1}}^{a_i}
                            \left(1-\frac xC\right)^b\,dx\\
 &=\frac C{b+1}
       \left[1-\left(1-\frac{a_r}C\right)^{b+1}\right].
\end{align*}
The inequality follows from monotonicity of the integrand. For $b=0$, the zeroth power is interpreted as one, consistent with no absorbing seed.
\end{proof}

\begin{lemma}[Score comparison]
\label{lem:proxy}
The score in~\eqref{eq:score} satisfies
\begin{equation}
 \frac kC\le X(R)\le1,\qquad \frac{n\mu}{2}\le\OPT_k.
 \label{eq:proxy}
\end{equation}
\end{lemma}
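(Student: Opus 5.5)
The bounds $k/C\le X(R)\le1$ follow directly from the definition~\eqref{eq:score}. Every RR set contains its target, and $c(v)=1+\deg^-(v)\ge1$ for every vertex, so $w(R)\ge1$ and $k\,w(R)/C\ge k/C$. We also have $k/C\le k/(m+n)\le1$ because $k\le n$. The minimum with one therefore stays at least $k/C$ and never exceeds one.

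For $n\mu/2\le\OPT_k$, the plan is to compare $\mu$ with the hitting probability of a random cost-weighted seed set. Let $T$ consist of the distinct vertices among $k$ independent draws from $\pi$. Then $|T|\le k$, so $\sigma(T)\le\OPT_k$ for every realization of $T$. Take $T$ independent of $R$. For a fixed $R$,
\[
 \Prob_T[T\cap R\ne\varnothing]=1-\left(1-\frac{w(R)}C\right)^k,
\]
since each draw lands in $R$ with probability $\pi(R)=w(R)/C$. By the second inequality of Lemma~\ref{lem:scalar} with $x=w(R)/C\in[0,1]$, this is at least $\tfrac12\min\{1,k\,w(R)/C\}=\tfrac12X(R)$.

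Next we take expectations over both $R$ and $T$ and swap the order. By Lemma~\ref{lem:rr}, applied conditionally on each fixed $T$ (valid because $T$ and $R$ are independent),
\[
 \frac{\mu}{2}\le\E_R\Prob_T[T\cap R\ne\varnothing]
 =\E_T\!\left[\frac{\sigma(T)}n\right]\le\frac{\OPT_k}n .
\]
Multiplying by $n$ gives the claim.

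No step is a real obstacle. The only points to handle carefully are the independence of $T$ and $R$, which justifies applying Lemma~\ref{lem:rr} conditionally (Fubini), and the fact that drawing with replacement still yields feasible sets of size at most $k$.
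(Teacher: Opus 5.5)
Your proposal is correct and follows the paper's proof essentially step for step: the bounds on $X(R)$ come from $1\le w(R)\le C$ together with $k\le n\le C$, and $n\mu/2\le\OPT_k$ comes from comparing $\mu$ with the hitting probability of the distinct set of $k$ independent $\pi$-draws, via Lemma~\ref{lem:scalar} and the RR identity of Lemma~\ref{lem:rr}. You also state explicitly that $T$ is independent of $R$, which justifies applying Lemma~\ref{lem:rr} conditionally; the paper leaves this implicit.
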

\begin{proof}
The RR set is nonempty, so $1\le w(R)\le C$. Since $k\le n\le C$, the bounds on $X(R)$ follow. Independently of $R$, draw $k$ vertices from $\pi$ and let $U_k$ be their distinct set. Conditional on $R$, the probability that $U_k$ hits it is $1-(1-w(R)/C)^k$. Lemma~\ref{lem:scalar} and the RR identity give
\begin{align*}
 \frac{n\mu}{2}
 \le n\E_R\!\left[1-\left(1-\frac{w(R)}C\right)^k\right]=\E_{U_k}[\sigma(U_k)]\le\OPT_k.
\end{align*}
Every possible $U_k$ contains at most $k$ eligible vertices, proving the final inequality.
\end{proof}

The inequality $n\mu/2\le\OPT_k$ gives a sufficient scale for sampling, but it need not be tight. Nor do we assume $n\mu\ge k$. The deterministic cap in the eventual sample schedule will use the separate fact $\OPT_k\ge k$.

For a complete RR set, the bracket in Lemma~\ref{lem:prefix} is at most $X(R)$ when $b+1\le k$. Averaging over the RR sample gives the following expected search-cost bound.
\begin{theorem}[Expected cost of an absorbing sample]
\label{thm:sample-cost}
Assume $b+1\le k$. If the prescribed set $B$ and a fresh RR sample are independent, then
\begin{equation}
 \E_{B,R}[\operatorname{Time}(\Absorb(B))]
 =O\!\left(\frac{C\mu}{b+1}\right).
 \label{eq:absorb-cost}
\end{equation}
Each search has worst-case running time $O(C)=O(m+n)$.
\end{theorem}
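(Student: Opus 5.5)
We prove the bound by conditioning on the complete reverse search and then averaging, using Lemma~\ref{lem:prefix} for the conditional cost and the score definition~\eqref{eq:score} to pass to $\mu$.

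\textbf{Step 1: charge the work to expanded vertices.} Couple $\Absorb(B)$ with the uninterrupted reverse BFS as in Lemma~\ref{lem:compression}. Fix the root, the live/dead status of all edges, and the incoming-adjacency order. This fixes the complete pop order $v_1,\ldots,v_r$ of $R$, and that order is independent of $B$. In the interrupted search, the work on each popped vertex is $O(1)$ for the pop and the membership test. A vertex that is not absorbed also pays $O(\deg^-(v))$ for scanning its incoming edges and drawing the Bernoulli trials, so its total cost is $O(c(v))$. The absorbing vertex is popped and tested but not expanded, which costs $O(1)$. Initialization (sampling the root and setting up $Q$ and $U$) costs $O(1)$. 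Hence $\operatorname{Time}(\Absorb(B))=O(1+Q_B)$, where $Q_B$ is the charge of Lemma~\ref{lem:prefix}. One implementation point must be stated: membership in $B$ and in $U$ is tested in $O(1)$ with marker arrays. The $B$-markers are built once per batch and reused, and the $U$-markers are reset only on the vertices touched by the search, so no $O(n)$ term appears per search.

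\textbf{Step 2: apply the prefix bound and compare with the score.} Lemma~\ref{lem:prefix} with $a_r=w(R)$ gives
\[
\E_B[Q_B\mid v_1,\ldots,v_r]\le\frac C{b+1}\left[1-\left(1-\frac{w(R)}C\right)^{b+1}\right].
\]
Since $0\le w(R)/C\le1$ and $1\le b+1\le k$, the first inequality of Lemma~\ref{lem:scalar} bounds the bracket by $\min\{1,kw(R)/C\}=X(R)$. Averaging over the root and the edge trials, with $B$ independent of them, yields $\E[Q_B]\le C\mu/(b+1)$.

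\textbf{Step 3: absorb the additive constant.} The $O(1)$ term must be dominated by $C\mu/(b+1)$. Lemma~\ref{lem:proxy} gives $X(R)\ge k/C$, so $\mu\ge k/C$ and
\[
\frac{C\mu}{b+1}\ge\frac{k}{b+1}\ge1,
\]
using $b+1\le k$. Therefore $\E_{B,R}[\operatorname{Time}(\Absorb(B))]=O(C\mu/(b+1))$.

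\textbf{Worst case.} Each vertex is popped at most once, and each incoming edge is scanned at most once, so any single search costs $O(n+m)=O(C)$.

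The main obstacle is the accounting in Step 1: justifying that the absorbing vertex costs only $O(1)$, and that per-search bookkeeping costs no $O(n)$. The lazy marker reset and the fact that the membership test precedes expansion handle both.
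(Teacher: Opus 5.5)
Your proposal is correct and follows the paper's proof essentially step for step. You condition on the uninterrupted search order, apply the prefix-integral bound with $a_r=w(R)$, bound the bracket by $X(R)$ via the first inequality of Lemma~\ref{lem:scalar}, charge the actual time as $O(1+Q_B)$, and absorb the additive constant using $\mu\ge k/C$ and $b+1\le k$; the only differences are the order of presentation and your explicit marker-array reset, which the paper treats in the proof of Theorem~\ref{thm:ic-time}.
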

\begin{proof}
Condition on a complete realization, its root, and the uninterrupted search order. In Lemma~\ref{lem:prefix}, $a_r=w(R)$. Since $b+1\le k$, Lemma~\ref{lem:scalar} gives
\[
 1-\left(1-\frac{w(R)}C\right)^{b+1}\le X(R).
\]
Thus $\E_{B,R}Q_B\le C\mu/(b+1)$.

Each expanded vertex, including its incoming-edge trials, discovery checks, and queue insertions, costs $O(c(v))$. Root initialization and the final membership check cost $O(1)$. Every nonroot vertex discovered before termination was enqueued by an already charged expansion. Clearing the remaining queue, discarding partial lists, and resetting touched marks can be charged to the same operations. Therefore the actual time is $O(1+Q_B)$. Lemma~\ref{lem:proxy} gives $\mu\ge k/C$, so
\[
 \frac{C\mu}{b+1}\ge\frac{k}{b+1}\ge1.
\]
The additive constant is thus absorbed by $C\mu/(b+1)$ in expectation, proving~\eqref{eq:absorb-cost}. A vertex is expanded at most once, so $Q_B\le C$; this also gives the worst-case bound.
\end{proof}

The expectation in~\eqref{eq:absorb-cost} is joint over the random prescribed set and the new sample. We do not assert this bound for every fixed $B$. The final algorithm will keep $B$ independent of the estimation phase, so the random sample count can be analyzed without conditioning on a favorable prescribed set.

\subsection{Observing the score by an exact capped search}
\label{sec:capped-score}
A complete RR sample may be large, but $X(R)=1$ as soon as the costs of the vertices already removed from the queue sum to at least $C/k$. Every such vertex belongs to the complete RR set, so this threshold determines the score exactly. The test includes the current vertex's precomputed cost but precedes the scan of its incoming edges.

Algorithm~\ref{alg:score} implements this rule. It returns $1$ when the threshold is reached and otherwise completes the search to compute $kw(R)/C$.

\begin{algorithm}[H]
\caption{$\Score(k)$: exact score from a capped reverse search}
\label{alg:score}
\KwIn{The IC network, the costs from Definition~\ref{def:ic-costs}, and budget $k$}
\KwOut{Exactly $X(R)$ for an independent complete RR sample}
Sample a uniform root $t$; initialize $Q=[t]$, $U=\{t\}$, and $w\gets0$\;
\While{$Q\ne\varnothing$}{
 $v\gets\operatorname{pop}(Q)$; $w\gets w+c(v)$\;
 \If{$kw\ge C$}{\Return{$1$}\Comment*[r]{Do not expand the crossing vertex}}
 \ForEach{$(u,v)\in E$, in incoming-adjacency order}{
  Draw an independent $Z_{uv}\sim\operatorname{Bernoulli}(p_{uv})$\;
  \If{$Z_{uv}=1$ and $u\notin U$}{Add $u$ to $U$ and append it to $Q$\;}
 }
}
\Return{$kw/C$}\;
\end{algorithm}

\begin{lemma}[Exact capped score and its work]
\label{lem:score-cost}
Algorithm~\ref{alg:score} returns the exact statistic $X(R)$ under a coupling with a complete RR sample. For a realized complete sample $R$, its running time is bounded by
\begin{equation}
 O\!\left(\frac Ck X(R)\right).
 \label{eq:score-cost}
\end{equation}
\end{lemma}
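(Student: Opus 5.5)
We couple Algorithm~\ref{alg:score} with an uninterrupted reverse breadth-first search on the same root, live/dead edge statuses, and incoming-adjacency order, as in the proof of Lemma~\ref{lem:compression}. Let $v_1,\ldots,v_r$ be the order in which the uninterrupted search pops vertices, so $R=\{v_1,\ldots,v_r\}$, and let $a_i=\sum_{j\le i}c(v_j)$ as in Lemma~\ref{lem:prefix}. Until Algorithm~\ref{alg:score} returns, its queue, discovered set, and popped sequence coincide with those of the uninterrupted search. Consequently, after popping $v_i$, its accumulator equals $a_i$.

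For exactness we split into two cases. If $ka_r<C$, the threshold test never fires. The search then completes, and it returns $kw(R)/C=X(R)$, since $w(R)=a_r$ and the minimum in~\eqref{eq:score} is attained by the second term. Otherwise, let $i^\ast$ be the first index with $ka_{i^\ast}\ge C$. The algorithm returns $1$ at that pop. Because $a_{i^\ast}\le a_r=w(R)$, we have $kw(R)/C\ge1$, and hence $X(R)=1$. Since the test precedes the incoming-edge scan, the returned value never depends on edges of the crossing vertex, so the coupling is unaffected.

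For the work bound, each expanded vertex $v$ costs $O(c(v))$ for its edge trials, discovery checks, and enqueues. Each popped vertex that is not expanded costs $O(1)\le O(c(v))$. Initialization costs $O(1)$, and resetting touched marks is charged to the enqueues, exactly as in Theorem~\ref{thm:sample-cost}. The total work is therefore $O(1+A)$, where $A$ is the cost of the expanded vertices, and we bound it in the same two cases. In the completed case, $A=a_r=w(R)=(C/k)X(R)$. In the capped case, the expanded vertices are $v_1,\ldots,v_{i^\ast-1}$, so $A=a_{i^\ast-1}<C/k=(C/k)X(R)$. This is the step where the ordering of the test before the scan matters: the crossing vertex's possibly large indegree is never paid. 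Including its cost in the popped-vertex charge is only $O(1)$, because we do not scan its edges.

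The remaining point, which I expect to be the only subtle one, is absorbing the additive $O(1)$. Lemma~\ref{lem:proxy} gives $X(R)\ge k/C$, so $(C/k)X(R)\ge1$. Hence $O(1+A)=O\bigl((C/k)X(R)\bigr)$ holds pointwise for every realized $R$, which is~\eqref{eq:score-cost}.
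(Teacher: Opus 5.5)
Your proposal is correct and follows the paper's proof. Like the paper, you couple with an uninterrupted reverse BFS, split into the completed and capped cases, and charge the work to expanded vertices. In particular, you note that the crossing vertex is tested but never expanded, so the expanded cost stays strictly below $C/k$. The only cosmetic difference is that you absorb the additive $O(1)$ uniformly via $X(R)\ge k/C$ from Lemma~\ref{lem:proxy}, whereas the paper uses $C/k\ge1$ in the capped case; both are valid.
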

\begin{proof}
Fix a complete realization and compare with an uninterrupted reverse BFS. Before termination, the two searches have identical discoveries and queue order. If the threshold is crossed, every popped vertex is in the complete RR set, so $w(R)$ is at least the accumulated cost and its score is one. Otherwise the search finishes, visits every vertex of $R$ once, and computes $kw(R)/C<1$.

In the noncrossing case, the sum of expanded local costs is $w(R)=(C/k)X(R)$, which pays for the search and its cleanup. In the crossing case, the expanded vertices have total cost strictly less than $C/k$; the crossing vertex itself is checked but not expanded. Initialization and that last check cost $O(1)$, absorbed because $C/k\ge1$. As in Theorem~\ref{thm:sample-cost}, queued vertices were inserted by charged expansions, which also pay for their cleanup. Thus both cases satisfy~\eqref{eq:score-cost}.
\end{proof}

This is an exact statistic of an ordinary RR sample, not an estimate of a different diffusion process. In particular, repeated calls with fresh randomness produce independent observations with the original mean $\mu$.

\subsection{Accumulating scores to determine the sample count}
\label{sec:sample-count}
The estimation phase accumulates independent values $X_1,X_2,\ldots$ returned by $\Score(k)$. It stops when their sum reaches a prescribed threshold $\Gamma$. This inverse-sum rule is a bounded-observation stopping method~\citep{DKLR00}; Lemma~\ref{lem:estimation} states its guarantees.

For $\Gamma\ge1$, let
\begin{equation}
 \tau=\min\left\{t:\sum_{i=1}^{t}X_i\ge\Gamma\right\}.
 \label{eq:tau}
\end{equation}
The estimation phase returns only $\tau$. It does not return a seed set, an influence estimate, or an estimate of $\OPT_k$.

\begin{lemma}[Sample-count estimation guarantees]
\label{lem:estimation}
The stopping time in~\eqref{eq:tau} satisfies
\begin{align}
 \tau&\le\left\lceil\frac{\Gamma C}{k}\right\rceil
       &&\text{deterministically},\label{eq:tau-cap}\\
 \E\tau&\le\frac{\Gamma+1}{\mu},\label{eq:tau-mean}\\
 \Prob\!\left[\frac{\Gamma}{2\tau}>\mu\right]
 &\le\exp\!\left[-\Gamma\left(\ln2-\tfrac12\right)\right]
 \le e^{-\Gamma/8}.\label{eq:tau-tail}
\end{align}
The estimation phase takes $O(C(\Gamma+1)/k)$ time in the worst case.
\end{lemma}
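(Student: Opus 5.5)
We prove the four claims separately, relying on the facts $k/C\le X_i\le1$ from Lemma~\ref{lem:proxy} and the work bound of Lemma~\ref{lem:score-cost}.

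\emph{Deterministic cap.} Every observation satisfies $X_i\ge k/C$. Hence after $t=\lceil\Gamma C/k\rceil$ observations the partial sum is at least $tk/C\ge\Gamma$, so $\tau\le t$; this gives~\eqref{eq:tau-cap}.

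\emph{Worst-case time.} Lemma~\ref{lem:score-cost} charges each call $O((C/k)X_i)$. The total time is therefore $O\bigl((C/k)\sum_{i\le\tau}X_i\bigr)$. By minimality of $\tau$ we have $\sum_{i<\tau}X_i<\Gamma$, and $X_\tau\le1$, so $\sum_{i\le\tau}X_i<\Gamma+1$. The phase thus costs $O(C(\Gamma+1)/k)$ deterministically. Note that this charging is pathwise, not only in expectation.

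\emph{Mean of $\tau$.} We apply Wald's identity to the i.i.d.\ sequence $X_i$ and the stopping time $\tau$. This is legitimate because $\tau$ is bounded by~\eqref{eq:tau-cap}, so $\E\tau<\infty$. Wald gives $\mu\,\E\tau=\E\sum_{i\le\tau}X_i<\Gamma+1$, which is~\eqref{eq:tau-mean}.

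\emph{Tail bound.} The event $\Gamma/(2\tau)>\mu$ says $\tau<\Gamma/(2\mu)$. Set $t_0=\lfloor\Gamma/(2\mu)\rfloor$, adjusted to the strict inequality. On this event the partial sum reaches $\Gamma$ within $t_0$ steps, so $S_{t_0}=\sum_{i\le t_0}X_i\ge\Gamma$, where $S_{t_0}$ has mean at most $\Gamma/2$ (if $t_0=0$ the event is impossible since $\tau\ge1$). It therefore suffices to bound $\Prob[S_{t}\ge\Gamma]$ for any fixed $t$ with $t\mu\le\Gamma/2$. We use the exponential moment method. For $X\in[0,1]$, convexity gives $\E e^{\lambda X}\le1+\mu(e^\lambda-1)\le\exp(\mu(e^\lambda-1))$. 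Hence
\[
\Prob[S_t\ge\Gamma]\le\exp\bigl(t\mu(e^\lambda-1)-\lambda\Gamma\bigr)\le\exp\bigl(\Gamma[(e^\lambda-1)/2-\lambda]\bigr).
\]
Choosing $\lambda=\ln2$ gives exponent $\Gamma(1/2-\ln2)$, which is exactly $-\Gamma(\ln2-\tfrac12)$. Finally, $\ln2-\tfrac12\approx0.193\ge1/8$ gives the second inequality in~\eqref{eq:tau-tail}.

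The main care point is this reduction. Monotonicity of partial sums in $t$ (the $X_i$ are nonnegative) justifies replacing the random-time event by the fixed-time event at $t_0$. Using $t_0\mu\le\Gamma/2$ rather than equality only strengthens the bound, since $e^\lambda-1>0$. The remaining steps are routine.
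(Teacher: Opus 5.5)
Your proposal is correct and follows the paper's proof essentially step for step. It uses the lower bound $X_i\ge k/C$ for the cap, and the overshoot bound $\sum_{i\le\tau}X_i<\Gamma+1$ for both the pathwise work bound and the mean; for the mean, the paper writes out Wald's identity directly via $\E[X_i\ind\{\tau\ge i\}]=\mu\Prob[\tau\ge i]$ rather than citing it. For the tail, it uses the same exponential-moment bound with $\lambda=\ln 2$ (the paper's $\E[2^{X_i}]\le 1+\mu$), applied at the fixed time $s=\lceil\Gamma/(2\mu)\rceil-1$, which is exactly what your ``adjusted'' $t_0$ should be.
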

\begin{proof}
Every observation is at least $k/C$, proving~\eqref{eq:tau-cap}. Since every observation is at most one, the total at termination satisfies
\begin{equation}
 \Gamma\le\sum_{i=1}^{\tau}X_i<\Gamma+1.
 \label{eq:overshoot}
\end{equation}
For each $i$, the event $\{\tau\ge i\}$ depends only on $X_1,\ldots,X_{i-1}$ and is independent of $X_i$. The deterministic bound~\eqref{eq:tau-cap} allows finite summation, yielding
\[
 \E\sum_{i=1}^{\tau}X_i
 =\sum_i\E\bigl[X_i\ind\{\tau\ge i\}\bigr]
 =\mu\sum_i\Prob[\tau\ge i]
 =\mu\E\tau.
\]
Together with~\eqref{eq:overshoot}, this proves~\eqref{eq:tau-mean}; no separate optional-stopping premise is needed.

For the one-sided event, set $s=\lceil\Gamma/(2\mu)\rceil-1$. If $s=0$, the event $\tau<\Gamma/(2\mu)$ is impossible. Otherwise this event implies $\sum_{i=1}^sX_i\ge\Gamma$, whereas $s\mu<\Gamma/2$. For $x\in[0,1]$, convexity gives $2^x\le1+x$, so $\E[2^{X_i}]\le1+\mu\le e^\mu$. Markov's inequality and independence imply
\begin{align*}
 \Prob\!\left[\tau<\frac{\Gamma}{2\mu}\right]
 &\le\Prob\!\left[\sum_{i=1}^{s}X_i\ge\Gamma\right]\\
 &\le2^{-\Gamma}\prod_{i=1}^{s}\E[2^{X_i}]\\
 &\le\exp(-\Gamma\ln2+s\mu)\\
 &\le\exp\!\left[-\Gamma\left(\ln2-\tfrac12\right)\right]
 \le e^{-\Gamma/8}.
\end{align*}
This proves~\eqref{eq:tau-tail} by a fixed-length tail bound for an event containing early stopping, rather than by applying a fixed-sample inequality to a stopped average.

Finally, Lemma~\ref{lem:score-cost} bounds the work of any execution by
\[
 \sum_{i=1}^{\tau}O\!\left(\frac CkX_i\right)
 =O\!\left(\frac Ck(\Gamma+1)\right).
\]
This uses exact capped score generation; unnecessarily completing every preliminary RR search would not satisfy this sharper deterministic bound.
\end{proof}

Set $H$ as in~\eqref{eq:H} and define
\begin{equation}
 \Gamma=\left\lceil8\ln\frac2\delta\right\rceil,
 \qquad
 a_\tau=\min\left\{\frac nk,\frac{4\tau}{\Gamma}\right\},
 \qquad
 N=\left\lceil\frac{27H}{\eps^2}a_\tau\right\rceil.
 \label{eq:schedule}
\end{equation}
The next lemma gives the two properties needed from this choice: a sufficient sample count with high probability, and an unconditional moment bound for the running-time analysis.

\begin{lemma}[Sufficiency and unconditional sample-count bounds]
\label{lem:schedule}
With probability at least $1-\delta/2$ over the estimation phase,
\begin{equation}
 a_\tau\ge\frac n{\OPT_k},\qquad
 N\ge\frac{27nH}{\eps^2\OPT_k}.
 \label{eq:schedule-sufficient}
\end{equation}
Unconditionally,
\begin{equation}
 \E[a_\tau]\le\min\left\{\frac nk,\frac8\mu\right\},
 \qquad
 N\le\left\lceil\frac{27nH}{\eps^2k}\right\rceil
 \quad\text{deterministically}.
 \label{eq:schedule-moments}
\end{equation}
\end{lemma}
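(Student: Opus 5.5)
The plan is to derive both parts directly from Lemma~\ref{lem:estimation} and Lemma~\ref{lem:proxy}, treating the cap $n/k$ and the estimate $4\tau/\Gamma$ separately.

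\textbf{Sufficiency.} We work on the complement of the event in~\eqref{eq:tau-tail}, namely $\Gamma/(2\tau)\le\mu$, which fails with probability at most $e^{-\Gamma/8}$. With $\Gamma=\lceil 8\ln(2/\delta)\rceil$ we have $e^{-\Gamma/8}\le\delta/2$, so this event holds with probability at least $1-\delta/2$. On it, $4\tau/\Gamma\ge 2/\mu$. Lemma~\ref{lem:proxy} gives $n\mu/2\le\OPT_k$, hence $2/\mu\ge n/\OPT_k$. For the cap, \eqref{eq:opt-range} gives $\OPT_k\ge k$, hence $n/k\ge n/\OPT_k$. Since both arguments of the minimum are at least $n/\OPT_k$, we get $a_\tau\ge n/\OPT_k$. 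The bound on $N$ then follows from the definition in~\eqref{eq:schedule}, because $N\ge 27Ha_\tau/\eps^2$. This part requires no further work.

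\textbf{Moments.} Since $a_\tau\le n/k$ deterministically, $\E[a_\tau]\le n/k$. Also $a_\tau\le 4\tau/\Gamma$, so by~\eqref{eq:tau-mean}, $\E[a_\tau]\le 4(\Gamma+1)/(\Gamma\mu)$. Because $\delta<1/2$, we have $\Gamma\ge 8\ln 4>1$, so $\Gamma\ge 2$ and $(\Gamma+1)/\Gamma\le 3/2$. This gives $\E[a_\tau]\le 6/\mu\le 8/\mu$. The deterministic bound on $N$ follows from $a_\tau\le n/k$ together with monotonicity of the ceiling.

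\textbf{Main obstacle.} The only delicate point is confirming that the constants line up: $e^{-\Gamma/8}\le\delta/2$ requires $\Gamma\ge 8\ln(2/\delta)$, which the ceiling guarantees, and $(\Gamma+1)/\Gamma\le 2$ holds because $\Gamma\ge 1$. We also note that the high-probability statement concerns only the estimation-phase randomness, which is what the independence structure of the later analysis needs.
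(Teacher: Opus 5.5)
Your proposal is correct and follows essentially the same route as the paper: the same tail event $\{\Gamma/(2\tau)\le\mu\}$ combined with $n\mu/2\le\OPT_k$ and $\OPT_k\ge k$ for sufficiency, and the two separate upper bounds $a_\tau\le n/k$ and $a_\tau\le 4\tau/\Gamma$ with $\E\tau\le(\Gamma+1)/\mu$ for the moments. Your intermediate bound $6/\mu$ (via $\Gamma\ge2$) is only a cosmetic sharpening of the paper's $(\Gamma+1)/\Gamma\le2$ step.
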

\begin{proof}
Let $\cA=\{\Gamma/(2\tau)\le\mu\}$. Lemma~\ref{lem:estimation} and the choice of $\Gamma$ give $\Prob[\cA]\ge1-\delta/2$. On $\cA$,
\[
 \frac{4\tau}{\Gamma}\ge\frac2\mu\ge\frac n{\OPT_k},
\]
where the second inequality uses $n\mu/2\le\OPT_k$. Also $n/k\ge n/\OPT_k$. Taking the minimum preserves this lower bound and proves~\eqref{eq:schedule-sufficient}.

For the moments, $a_\tau\le n/k$ holds deterministically. In addition,
\[
 \E[a_\tau]\le\frac4\Gamma\E\tau
 \le\frac{4(\Gamma+1)}{\Gamma\mu}\le\frac8\mu,
\]
using $\Gamma\ge1$. The upper bound on $N$ follows directly from its cap in~\eqref{eq:schedule}. These expectation bounds include all estimation results, not only those in $\cA$.
\end{proof}

\subsection{The complete algorithm: correctness and running time}
\label{sec:complete}
Algorithm~\ref{alg:main} first determines the sample count, then calls $\Select(k,\eps,N)$ (Algorithm~\ref{alg:select}). The calls to $\Score$, the prescribed seed draws inside $\Select$, and the final RR realizations use independent randomness. Consequently, after conditioning on the estimation phase, the final sample count is fixed and both the prescribed seeds and the final sample stream retain their intended distributions.

\begin{algorithm}[H]
\caption{$\Main(G,k,\eps,\delta)$: the complete algorithm}
\label{alg:main}
\KwIn{An IC network $(G,p)$; $1\le k\le n$, $0<\eps\le1/4$, $0<\delta<1/2$}
\KwOut{A set of at most $k$ seeds}
\If{$k=n$}{\Return{$V$}\;}
Build incoming adjacency lists, compute $c(v)$ and $C=m+n$, and initialize a sampler for $\pi$ and reusable vertex arrays\;
$H\gets\lceil k\ln(en/k)+\ln(4/\delta)\rceil$;
$\Gamma\gets\lceil8\ln(2/\delta)\rceil$\;
$Z\gets0$; $\tau\gets0$\;
\While{$Z<\Gamma$}{
 $Z\gets Z+\Score(k)$ with fresh independent randomness\;
 $\tau\gets\tau+1$\;
}
$a_\tau\gets\min\{n/k,4\tau/\Gamma\}$;
$N\gets\lceil(27H/\eps^2)a_\tau\rceil$\;
\Return{$\Select(k,\eps,N)$ using randomness independent of the entire estimation phase}\;
\end{algorithm}

\paragraph{Correctness.}
\begin{theorem}[Approximation of the complete algorithm]
\label{thm:quality}
For the IC model, Algorithm~\ref{alg:main} returns a $(1-1/e-\eps)$ approximation with probability at least $1-\delta$.
\end{theorem}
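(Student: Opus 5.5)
The proof combines two failure events through a union bound, using the independence between the estimation phase and $\Select$. First I will dispose of the trivial branch $k=n$: returning $V$ attains $\sigma(V)=n\ge\OPT_k$ deterministically. For $k<n$, let $\cA$ be the event of Lemma~\ref{lem:schedule} on which $N\ge 27nH/(\eps^2\OPT_k)$. That lemma gives $\Prob[\cA^c]\le\delta/2$.

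Next I will condition on the full outcome of the estimation phase, meaning the sequence of $\Score$ observations and hence $\tau$ and $N$. Algorithm~\ref{alg:main} uses randomness for $\Select$ that is independent of this phase. So, conditionally, $N$ is a fixed integer, and the draws forming $B$ and the final $N$ RR trials keep their unconditional joint distribution. In particular, $B$ is independent of the final samples and has $|B|\le b=\lfloor\eps k/3\rfloor$. On $\cA$, the fixed $N$ satisfies~\eqref{eq:N-sufficient}, so Theorem~\ref{thm:fixed-N} applies conditionally. It shows that $\Select(k,\eps,N)$ fails to return a $(1-1/e-\eps)$ approximation with conditional probability at most $\delta/2$.

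To conclude, I will integrate over the estimation outcome:
\[
\Prob[\text{failure}]\le\Prob[\cA^c]+\E\bigl[\ind_{\cA}\,\Prob[\text{failure}\mid\text{estimation}]\bigr]\le\delta/2+\delta/2=\delta.
\]
I also need to confirm feasibility. Since $|B|\le b<k$ and $\Greedy$ returns $k-|B|$ additional vertices outside $B$, the output has exactly $k$ vertices. $\Greedy$ is well defined because $k-|B|\le n-|B|$.

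The only delicate point is justifying the conditional application of Theorem~\ref{thm:fixed-N}, since that theorem is stated for a fixed sample count. I will handle it by checking that $N$ is measurable with respect to the estimation randomness alone, and that the RR trials inside $\Select$ are i.i.d.\ given that randomness. The uniform-accuracy event~\eqref{eq:uniform} then holds for the random-but-conditionally-fixed $N$. It also covers the data-dependent greedy output exactly as in the fixed case.
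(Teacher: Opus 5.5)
Your proposal is correct and follows the paper's proof essentially step for step: the trivial $k=n$ branch, the event $\cA$ from Lemma~\ref{lem:schedule} with $\Prob[\cA^c]\le\delta/2$, the conditional application of Theorem~\ref{thm:fixed-N} using that $\Select$'s randomness is independent of the estimation phase, and averaging to get total failure probability at most $\delta$. Your feasibility check ($|B|\le b<k$ and $k-|B|\le n-|B|$) is slightly more explicit than the paper's one-line remark, but the argument is the same.
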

\begin{proof}
If $k=n$, the returned set is optimal. Otherwise let $\cA$ be the sample-count sufficiency event in Lemma~\ref{lem:schedule}, whose failure probability is at most $\delta/2$. Conditional on any full estimation result in $\cA$, the sample count meets~\eqref{eq:N-sufficient} and the final samples and prescribed seeds are independent of that estimation phase. Theorem~\ref{thm:fixed-N} therefore bounds conditional approximation failure by $\delta/2$. Averaging over the estimation phase gives
\[
 \Prob[\text{approximation failure}]
 \le\Prob[\cA^c]
   +\E\!\left[\ind\{\cA\}
       \Prob[\text{failure}\mid\text{estimation phase}]\right]
 \le\delta.
\]
Feasibility follows from $|B|+(k-|B|)=k$.
\end{proof}

\paragraph{Expected running time and cancellation of the budget.}
\label{sec:runtime}
Two sources of randomness must be kept separate. The cost of a final search is averaged jointly over $B$ and that search's edge trials, whereas the number of such searches is determined by the independent sample-count estimation phase. This separation permits a factorization of expected work; it does not require the costs of searches sharing $B$ to be mutually independent.

\begin{theorem}[Budget-independent time for IC]
\label{thm:ic-time}
Algorithm~\ref{alg:main} has expected running time
\begin{equation}
 O\!\left(\frac{m+n}{\eps^3}
       \left[\ln\frac{en}{k}+\frac1k\ln\frac4\delta\right]\right).
 \label{eq:ic-time}
\end{equation}
\end{theorem}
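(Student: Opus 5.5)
We split the running time into preprocessing, the estimation phase, the prescribed-seed draws, the $N$ calls to $\Absorb$, and the final greedy, and bound each in expectation. The case $k=n$ is trivial, so assume $k<n$. Preprocessing (adjacency lists, costs, an alias sampler for $\pi$) takes $O(m+n)=O(C)$, and drawing $b\le k$ seeds takes $O(k)$. By Lemma~\ref{lem:estimation}, the estimation phase costs $O(C(\Gamma+1)/k)$ in the worst case. With $\Gamma=O(\ln(2/\delta))\le O(\ln(4/\delta))$, this is $O(\frac Ck\ln\frac4\delta)$, which is dominated by~\eqref{eq:ic-time}.

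The main term is the work of the $N$ stopped searches, and the key step is the factorization. Condition on the entire estimation phase; then $N$ is fixed, and $B$ together with each final search is independent of it. For every $i\le N$, the pair $(B,R_i)$ has the joint law assumed in Theorem~\ref{thm:sample-cost}, even though the searches share $B$. Linearity of expectation therefore gives
\[
\E[\text{search work}\mid\text{estimation}]\le N\cdot O\!\left(\frac{C\mu}{b+1}\right),
\]
which we take in expectation over the estimation phase.

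Theorem~\ref{thm:sample-cost} requires $b+1\le k$. This holds because $b=\lfloor\eps k/3\rfloor\le k/12<k$ for $k\ge 1$.

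We then use $N\le 1+\frac{27H}{\eps^2}a_\tau$ together with $\E[a_\tau]\le 8/\mu$ from Lemma~\ref{lem:schedule}:
\[
\E[N]\,\frac{C\mu}{b+1}\le\frac{C\mu}{b+1}+\frac{216\,HC}{\eps^2(b+1)}.
\]
The first term is at most $C$, since $\mu\le1$. For the second, we need $b+1\ge\eps k/3$, which holds because $\lfloor x\rfloor+1>x$. This gives
\[
O\!\left(\frac{HC}{\eps^3k}\right)=O\!\left(\frac{C}{\eps^3}\left[\ln\frac{en}{k}+\frac1k\ln\frac4\delta\right]\right),
\]
using $H\le k\ln(en/k)+\ln(4/\delta)+1$ and absorbing the $+1$ into the $k\ln(en/k)\ge k$ term. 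This cancellation of $\mu$ and of the factor $k$ is the heart of the argument. The only subtle point is that $N$ is random and correlated with nothing in the final phase, which is exactly what the independence structure of Algorithm~\ref{alg:main} provides.

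It remains to handle the greedy step, which is the main obstacle. By Lemma~\ref{lem:bucket} it costs $O(n+I)$, where $I$ is the total size of the retained sets. We charge $I$ to the searches. A retained set $R$ is a nonhit search that was fully expanded, so each vertex of $R$ was expanded and contributed $c(v)\ge1$ to $Q_B$. Hence $|R|\le Q_B$, and $I$ is at most the total search work already bounded.

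The $O(n)$ term, together with the per-call $O(1)$ overheads, which are $O(N)$ and already absorbed in Theorem~\ref{thm:sample-cost}, is bounded by $C$. Summing all parts gives~\eqref{eq:ic-time}. The inclusion into $O((m+n)\eps^{-3}\log(2n/\delta))$ follows from $\ln(en/k)\le 1+\ln n$ and $\frac1k\ln\frac4\delta\le\ln\frac4\delta$.
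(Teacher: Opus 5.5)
Your proof is correct and follows essentially the same route as the paper. The shared steps are:
\begin{itemize}
\item worst-case bounds for preprocessing and for the estimation phase;
\item conditioning on the estimation phase so that the final search work factors as $\E[N]\cdot O(C\mu/(b+1))$;
\item cancelling $\mu$ through $\mu\,\E[a_\tau]\le 8$;
\item charging the incidence count $I$ to the completed nonhit searches;
\item finishing with $b+1>\eps k/3$ and $H=O(k\ln(en/k)+\ln(4/\delta))$.
\end{itemize}
The differences are cosmetic: you derive $b+1\le k$ from integrality and $b\le k/12<k$ instead of the paper's case split, and you make the per-set charge $|R|\le Q_B$ explicit where the paper only says $I$ is bounded by the generation work.
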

\begin{proof}
If $k=n$, returning $V$ costs $O(n)$ and satisfies the bound. Otherwise, set $C=m+n$ and recall $b=\lfloor\eps k/3\rfloor$. We have $b+1>\eps k/3$ and $b+1\le k$: for $k=1$ the latter quantity equals one; for $k\ge2$, use $b+1\le k/12+1\le k$. Thus Theorem~\ref{thm:sample-cost} applies.

Constructing the incoming adjacency lists, computing the costs, and initializing the sampling and vertex-mark structures takes $O(m+n)=O(C)$ time. For a concrete constant-time sampler for $\pi$, create an array in which vertex $v$ occurs $c(v)$ times and draw a uniform array position; its construction takes $O(C)$ time. The vertex-mark arrays are allocated and initialized once. A touched-vertex list resets only the entries changed by an individual search, so later searches do not incur another $O(n)$ initialization cost. The sample-count estimation phase takes $O(C\Gamma/k)$ time in the worst case by Lemma~\ref{lem:estimation}. Drawing $b$ seeds and constructing the membership array of $B$ costs $O(n+k)$, absorbed by $C\ge n$.

Condition on the complete result of sample-count estimation. Now $N$ is fixed, and $B$ and the final complete RR samples retain their independent original distributions. Theorem~\ref{thm:sample-cost}, applied to $B$ and each individual sample, and linearity of expectation give conditional expected search work
\[
 O\!\left(\frac{NC\mu}{b+1}\right).
\]
In every execution, the total retained incidence count $I$ is bounded by the work of generating and storing the samples. Lemma~\ref{lem:bucket} adds $O(n+I)$ for all coverage selections. Taking expectation over the estimation phase therefore gives final-phase expected work
\begin{equation}
 O\!\left(n+\frac{C\mu}{b+1}\E[N]\right).
 \label{eq:runtime-factorization}
\end{equation}
This factorization is valid because $N$ is determined independently of $B$ and the final sample stream. Search costs may share $B$ and need not be mutually independent; we only sum their conditional expectations.

The ceiling in~\eqref{eq:schedule} gives $N\le1+(27H/\eps^2)a_\tau$, while Lemma~\ref{lem:schedule} gives $\mu\E[a_\tau]\le8$. Substitution into~\eqref{eq:runtime-factorization} yields
\[
 O\!\left(n+\frac{C\mu}{b+1}
             +\frac{CH}{\eps^2(b+1)}\right)
 =O\!\left(C+\frac{CH}{\eps^2(b+1)}\right),
\]
where we used $\mu\le1$, $b+1\ge1$, and $C\ge n$. Including initialization and sample-count estimation, the unconditional expected time is
\begin{equation}
 O\!\left(C+\frac{C\Gamma}{k}
             +\frac{CH}{\eps^2(b+1)}\right).
 \label{eq:explicit-time}
\end{equation}
All expectation bounds used here include inaccurate estimation results, so no separate rare-event running-time argument is needed.

Write $L_0=\ln(en/k)+k^{-1}\ln(4/\delta)$. Then $L_0\ge1$, $H=O(kL_0)$, and $\Gamma=O(\ln(2/\delta))$. Since $b+1>\eps k/3$,
\[
 \frac{CH}{\eps^2(b+1)}
 =O\!\left(\frac{CH}{\eps^3k}\right)
 =O(C\eps^{-3}L_0).
\]
The terms $C$ and $C\Gamma/k$ are absorbed by the same bound. Substituting $C=m+n$ proves~\eqref{eq:ic-time}.
\end{proof}

The cancellation can now be read directly from the bounds. The number of samples has expectation at most $O(H/(\eps^2\mu))$, while a search costs $O(C\mu/(b+1))$ in joint expectation. Multiplying removes the unknown mean $\mu$. The remaining linear budget term in $H$ is divided by $b+1>\eps k/3$, leaving $\eps^{-3}$ and logarithmic terms rather than a polynomial dependence on $k$.

\subsection{A deterministic running-time limit}
\label{sec:extensions}
The nearly linear guarantee above is an expected-time bound. Independent attempts with individual time limits yield a deterministic running-time bound while retaining a probabilistic approximation guarantee.

\begin{theorem}[Time-capped IC implementation]
\label{thm:capped}
Let $r=\lceil\log_2(2/\delta)\rceil$. There is an implementation with deterministic running-time bound
\begin{equation}
 O\!\left(\frac{r(m+n)}{\eps^3}
       \left[\ln\frac{en}{k}+\frac1k\ln\frac{8r}{\delta}\right]\right)
 \label{eq:capped-time}
\end{equation}
that always outputs a feasible set and satisfies~\eqref{eq:main-quality}.
\end{theorem}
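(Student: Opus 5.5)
The plan is a standard Markov-plus-repetition argument. We run $r=\lceil\log_2(2/\delta)\rceil$ independent attempts of $\Main(G,k,\eps,\delta')$ with the reduced failure parameter $\delta'=\delta/(4r)$. Each attempt has its own step budget $T=2K(m+n)\eps^{-3}[\ln(en/k)+k^{-1}\ln(4/\delta')]$, where $K$ is the hidden constant of Theorem~\ref{thm:ic-time}, so that $T$ is at least twice the expected running time of one attempt. An attempt that exceeds $T$ operations is aborted and its partial output discarded. Since $4/\delta'=16r/\delta$, we have $\ln(4/\delta')=\ln(8r/\delta)+\ln 2$, which is absorbed into $O(\ln(en/k)+k^{-1}\ln(8r/\delta))$ because $\ln(en/k)\ge1$. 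The total deterministic time is therefore $rT+O(m+n)$, which matches~\eqref{eq:capped-time}. The $O(m+n)$ term covers building adjacency lists once (or per attempt, also absorbed) and the fallback.

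For the approximation guarantee, we call an attempt \emph{good} if it completes within $T$ and its output is a $(1-1/e-\eps)$ approximation. By Markov's inequality, an attempt exceeds $T$ with probability at most $1/2$. By Theorem~\ref{thm:quality}, an attempt outputs a bad set with probability at most $\delta'$. Two issues remain: which completed attempt to output, and how to combine the failure events. The obstacle is that we cannot tell which completed outputs are good, because $\OPT_k$ and $\sigma$ are unknown.

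The simplest resolution is to return the output of the \emph{first} attempt that completes; if none completes, return an arbitrary $k$-set, which keeps the output feasible. Failure then requires either that all $r$ attempts time out, with probability at most $2^{-r}\le\delta/2$ by independence, or that some completed attempt returned a bad set. The latter is bounded by a union bound over the $r$ attempts: the probability that attempt $i$ completes and is bad is at most $\delta'$, giving total at most $r\delta'=\delta/4\le\delta/2$. Summing the two contributions gives failure at most $\delta$, which proves~\eqref{eq:main-quality}.

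We must check that truncation does not bias this argument. The union bound over all attempts uses only the unconditional bound $\Prob[\text{attempt completes and is bad}]\le\Prob[\text{attempt's full run is bad}]\le\delta'$. This holds by coupling each truncated attempt with its untruncated run, so no conditioning on completion is needed. This is the point I expect to require the most care. The remaining work is to verify that step counting adds only constant overhead and that the $k=n$ case is trivial.
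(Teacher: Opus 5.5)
Your proposal is correct and matches the paper's proof essentially step for step: independent attempts, each aborted at twice its expected time; Markov plus independence bounds the all-timeout probability by $2^{-r}\le\delta/2$; the first completed output is returned; and a union bound runs over uncapped failure events defined on each attempt's full random tape. The only difference is cosmetic: the paper uses $\gamma=\delta/(2r)$, so that $\ln(4/\gamma)=\ln(8r/\delta)$ exactly, while your choice $\delta/(4r)$ adds a $\ln 2$ term that you correctly absorb using $\ln(en/k)\ge1$.
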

\begin{proof}
Use failure parameter $\gamma=\delta/(2r)$ in each independent attempt. By Theorem~\ref{thm:ic-time}, the expected time of a complete attempt is at most
\[
 T_0=A\frac{m+n}{\eps^3}
      \left[\ln\frac{en}{k}+\frac1k\ln\frac4\gamma\right]
\]
for a sufficiently large absolute constant $A$ fixed by the implementation. Abort an unfinished attempt after $2T_0$ operations. Each attempt has its own independent preliminary samples, prescribed seeds, and final RR samples. Return the first completed output; if all $r$ attempts time out, return any $k$ vertices.

Markov's inequality bounds each timeout probability by $1/2$, so independence bounds the probability that all attempts time out by $2^{-r}\le\delta/2$. Define each attempt's uncapped output on its full random tape. Its approximation-failure probability is at most $\gamma$. An incorrect completed output belongs to the union of the $r$ uncapped failure events, whose probability is at most $r\gamma=\delta/2$. Thus selecting the first completed output needs no assumption that its completion time and quality are independent. The total failure probability is at most $\delta$ and the total time is $O(rT_0)$, yielding~\eqref{eq:capped-time}. Construction and cleanup of attempt-specific data structures are included in the operation budgets.
\end{proof}

\section{Extension to the triggering model}
\label{sec:trigger-extension}
We now return to the model in Definition~\ref{def:trigger}. An IC expansion examines incoming-edge Bernoulli trials; a triggering-model expansion instead samples the whole set $T_v\sim D_v$ and enumerates its members. The representation of $D_v$ may be much more expensive than an adjacency list, so here the corresponding costs must be specified explicitly.

\begin{definition}[Local sampling access for triggering models]
\label{def:access}
The distributions $D_v$ are given with an implementation that can be preprocessed in time $P$. For every vertex $v$, a known integer $c(v)\ge1$ bounds, up to a uniform constant, the \emph{worst-case} time to sample $T_v$ exactly, enumerate it, check discovery marks, and perform the resulting queue operations. Local samples are independent across vertices, and different searches use fresh independent random choices and uniform roots. In this extension, define
\begin{equation}
 C=\sum_{v\in V}c(v),\qquad \pi(v)=\frac{c(v)}C.
 \label{eq:trigger-costs}
\end{equation}
The local triggering-set sampler is charged through $c(v)$.
\end{definition}
This generalizes Definition~\ref{def:ic-costs}; $C$ need not equal $m+n$. Both search procedures retain their stopping tests \emph{before} sampling $T_v$. Costs and the sampler for $\pi$ are prepared once, and all searches reuse vertex arrays whose touched entries can be reset in time proportional to the discoveries in that search. An array containing $c(v)$ copies of each vertex gives an explicit $O(C)$-time construction of a constant-time sampler for $\pi$.

\begin{theorem}[Locally sampleable triggering models]
\label{thm:trigger}
Under Definition~\ref{def:access}, replacing the IC edge-expansion step by exact triggering-set generation gives the approximation guarantee~\eqref{eq:main-quality} in expected time
\begin{equation}
 O\!\left(P+\frac C{\eps^3}
       \left[\ln\frac{en}{k}+\frac1k\ln\frac4\delta\right]\right).
 \label{eq:general-time}
\end{equation}
It also admits a time-capped implementation with deterministic bound
\begin{equation}
 O\!\left(P+\frac{rC}{\eps^3}
       \left[\ln\frac{en}{k}+\frac1k\ln\frac{8r}{\delta}\right]\right),
 \qquad r=\left\lceil\log_2\frac2\delta\right\rceil,
 \label{eq:trigger-capped}
\end{equation}
and success probability at least $1-\delta$. If $P=O(m+n)$ and each triggering set is generated and enumerated in $O(1+\deg^-(v))$ worst-case time, the IC bounds hold for this model as well.
\end{theorem}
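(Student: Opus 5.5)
The plan is to rerun the IC analysis of Section~\ref{sec:calibration} with the abstract costs $c(v)$ and $C$ from Definition~\ref{def:access}, checking that every IC-specific step used only three facts. First, the expansion of $v$ costs $O(c(v))$ in the worst case. Second, $c(v)\ge1$. Third, $k\le n\le C$. All three hold under Definition~\ref{def:access}. For the third, $c(v)\ge1$ gives $C\ge n\ge k$.

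The quality argument carries over essentially unchanged. Lemma~\ref{lem:rr}, Lemma~\ref{lem:submodular}, Lemma~\ref{lem:forced} and Theorem~\ref{thm:fixed-N} use only the live-edge representation $A(S)=\Reach_L(S)$, which holds for triggering models by the induction after Definition~\ref{def:trigger}. They also use independence of the final samples from $B$. The reverse search now pops $v$, tests membership in $B$ (or the score threshold), then samples $T_v$ and enqueues undiscovered members. This produces exactly $\RR(L,t)$ for the live-edge graph with $E_L=\{(u,v):u\in T_v\}$. Each $T_v$ is sampled at most once per search, when $v$ is first expanded, and samples at different vertices are independent. The lazily revealed graph therefore has the correct joint law, and the coupling in Lemma~\ref{lem:compression} and Lemma~\ref{lem:score-cost} goes through verbatim. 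Lemma~\ref{lem:proxy} uses only $w(R)\ge1$, $C\ge k$ and draws from $\pi=c/C$, so it also holds. Lemma~\ref{lem:estimation}, Lemma~\ref{lem:schedule} and Theorem~\ref{thm:quality} are then purely probabilistic and give success probability at least $1-\delta$.

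For the running time, I would repeat Lemma~\ref{lem:prefix}, which is purely combinatorial in the costs, and Theorem~\ref{thm:sample-cost}. The charge $O(c(v))$ per expansion now covers sampling and enumerating $T_v$, discovery checks, and queue operations; this is exactly what Definition~\ref{def:access} grants in worst case. Cleanup is charged to the expansions that created the touched entries. This gives joint expected cost $O(C\mu/(b+1))$ per absorbing search, and worst-case $O(C/k\cdot X)$ per score call. Preprocessing costs $P$ for the samplers plus $O(C+n)$ for the $\pi$-sampler array and vertex arrays. The factorization of Theorem~\ref{thm:ic-time} yields $O(P+C+C\Gamma/k+CH/(\eps^2(b+1)))$, which simplifies to \eqref{eq:general-time} exactly as before. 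The step needing care is the coverage phase. Lemma~\ref{lem:bucket} costs $O(n+I)$, and $I$ must be bounded by search work. Each retained vertex other than the root was enqueued by an expansion whose enumeration was charged, so $I$ is at most the number of searches plus the charged work, and $n\le C$ absorbs the rest.

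For the capped bound \eqref{eq:trigger-capped}, I would repeat Theorem~\ref{thm:capped}. The only change is that preprocessing $P$ should be done once, outside the $r$ attempts. Each attempt then gets budget $2T_0$ with $T_0$ excluding $P$, and Markov's inequality applies to the post-preprocessing time, whose expectation is the bracketed term. This is why $P$ appears additively rather than multiplied by $r$. If preprocessing is itself randomized, its expected time would need capping too. I expect this bookkeeping to be the main subtlety, and I would assume deterministic preprocessing as the definition suggests.

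Finally, if $P=O(m+n)$ and $c(v)=1+\deg^-(v)$ is a valid worst-case bound, then $C=m+n$ and the bounds reduce to \eqref{eq:main-time}.
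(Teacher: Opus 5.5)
Your proposal is correct and follows essentially the same route as the paper's proof. Lazily sampling each $T_v$ at expansion preserves the exact-compression and exact-score couplings, the prefix-integral and capped-score cost bounds carry over with the abstract costs $c(v)$ and $C\ge n\ge k$, and $P$ is paid once outside the $r$ independent capped attempts. The one fact you leave unstated is that $\OPT_k\ge k$ still holds because seeds are themselves active, which Lemma~\ref{lem:schedule} needs for the cap $n/k$.
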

\begin{proof}
Fix all triggering sets. The final active set is reachability in the resulting live-edge graph, proving the same RR identity and coverage structure. Independence across vertices makes sampling $T_v$ when $v$ is expanded equivalent to revealing a set sampled in advance. A search's discoveries and queue order therefore agree with a complete search until either stopping rule fires. The exact-compression and exact-score proofs apply unchanged. In particular, correlation among the incoming edges of one vertex introduces no error.

The prefix-integral proof fixes a complete search order before averaging over the prescribed set $B$. It uses only that sampling a vertex with probability $c(v)/C$ hits a prefix of total cost $a$ with probability $a/C$. Thus the identical proof gives expected search work $O(C\mu/(b+1))$; the constant overhead is absorbed because $\mu\ge k/C$ and $b+1\le k$. Definition~\ref{def:access} pays for every local expansion, and the score search still costs $O((C/k)X(R))$. The comparison $n\mu/2\le\OPT_k$ uses a feasible random set of at most $k$ vertices and so is also unchanged. Here $C\ge n$ and seeds still give $\OPT_k\ge k$.

Fresh searches supply independent observations for sample-count estimation and independent complete final RR samples conditional on that phase. The Chernoff bound, prescribed-seed greedy lemma, sample-count sufficiency, and total-probability argument establish the same approximation guarantee. For time, the proof of Theorem~\ref{thm:ic-time} now gives
\[
 O\!\left(P+C+\frac{C\Gamma}{k}
       +\frac{CH}{\eps^2(b+1)}\right).
\]
Using $b+1>\eps k/3$ and $C\ge n$ as before yields~\eqref{eq:general-time}. For~\eqref{eq:trigger-capped}, perform the static preprocessing once, then apply the independent-attempt construction of Theorem~\ref{thm:capped} to the remaining work with $C$ in place of $m+n$. All attempt-specific work is covered by its time budget, so only $P$ lies outside the repetition factor. Finally, when local generation costs $O(1+\deg^-(v))$, take $c(v)=1+\deg^-(v)$ and substitute $C=m+n$.
\end{proof}

For example, the standard linear-threshold live-edge representation selects at most one incoming neighbor per vertex. Scanning cumulative incoming weights samples the selected neighbor or the empty set in edge-linear local time~\citep[Section~4.3]{KKT15}. Triggering sets may also have positively correlated or mutually exclusive members. An arbitrary distribution over the $2^{\deg^-(v)}$ possible triggering sets can require a much larger input representation. The quantities $P$ and $C$ account for that cost rather than treating an unrestricted distribution as a free sampling oracle.

\section{Conclusion}
\label{sec:conclusion}
We developed an influence-maximization algorithm whose nearly linear expected running time has no multiplicative polynomial dependence on the seed budget. The seed-selection phase reserves $O(\eps k)$ positions for cost-weighted random vertices, uses them to terminate already covered reverse searches exactly, and completes the solution by maximum-coverage greedy. Its approximation analysis retains uniform accuracy over all feasible sets. An independent sample-count estimation phase determines a sufficient sample count without guessing the optimum. Exact truncation makes this phase inexpensive, and matching its statistic to the absorbed-search cost eliminates the unknown scalar and the budget factor from the expected-work bound.

The result applies to arbitrary directed IC networks and to triggering models with the declared local sampling access. It concerns a specified cardinality budget with all vertices eligible and equal seed costs. The worst-case dependence on precision remains $\eps^{-3}$; the analysis does not show that this dependence is necessary. Improving it, extending the budget exchange to restricted seed candidates or other constraints, and producing one approximate ordering for all budgets require further ideas.


\bibliographystyle{plainnat}
\bibliography{budget_independent_im}

@inproceedings{DR01,
  author    = {Pedro Domingos and Matthew Richardson},
  title     = {Mining the Network Value of Customers},
  booktitle = {Proceedings of the Seventh {ACM} {SIGKDD} International Conference on Knowledge Discovery and Data Mining},
  pages     = {57--66},
  publisher = {ACM},
  year      = {2001},
  doi       = {10.1145/502512.502525}
}

@inproceedings{RD02,
  author    = {Matthew Richardson and Pedro Domingos},
  title     = {Mining Knowledge-Sharing Sites for Viral Marketing},
  booktitle = {Proceedings of the Eighth {ACM} {SIGKDD} International Conference on Knowledge Discovery and Data Mining},
  pages     = {61--70},
  publisher = {ACM},
  year      = {2002},
  doi       = {10.1145/775047.775057}
}

@inproceedings{KKT03,
  author    = {David Kempe and Jon Kleinberg and {\'{E}}va Tardos},
  title     = {Maximizing the Spread of Influence through a Social Network},
  booktitle = {Proceedings of the Ninth {ACM} {SIGKDD} International Conference on Knowledge Discovery and Data Mining},
  pages     = {137--146},
  publisher = {ACM},
  year      = {2003},
  doi       = {10.1145/956750.956769}
}

@article{KKT15,
  author    = {David Kempe and Jon Kleinberg and {\'{E}}va Tardos},
  title     = {Maximizing the Spread of Influence through a Social Network},
  journal   = {Theory of Computing},
  volume    = {11},
  number    = {4},
  pages     = {105--147},
  year      = {2015},
  doi       = {10.4086/toc.2015.v011a004}
}

@article{NWF78,
  author    = {George L. Nemhauser and Laurence A. Wolsey and Marshall L. Fisher},
  title     = {An Analysis of Approximations for Maximizing Submodular Set Functions---{I}},
  journal   = {Mathematical Programming},
  volume    = {14},
  pages     = {265--294},
  year      = {1978},
  doi       = {10.1007/BF01588971}
}

@inproceedings{CELF07,
  author    = {Jure Leskovec and Andreas Krause and Carlos Guestrin and Christos Faloutsos and Jeanne VanBriesen and Natalie Glance},
  title     = {Cost-Effective Outbreak Detection in Networks},
  booktitle = {Proceedings of the 13th {ACM} {SIGKDD} International Conference on Knowledge Discovery and Data Mining},
  pages     = {420--429},
  publisher = {ACM},
  year      = {2007},
  doi       = {10.1145/1281192.1281239}
}

@inproceedings{CELF11,
  author    = {Amit Goyal and Wei Lu and Laks V. S. Lakshmanan},
  title     = {{CELF++}: Optimizing the Greedy Algorithm for Influence Maximization in Social Networks},
  booktitle = {Proceedings of the 20th International Conference Companion on World Wide Web},
  pages     = {47--48},
  publisher = {ACM},
  year      = {2011},
  doi       = {10.1145/1963192.1963217}
}

@inproceedings{CWY09,
  author    = {Wei Chen and Yajun Wang and Siyu Yang},
  title     = {Efficient Influence Maximization in Social Networks},
  booktitle = {Proceedings of the 15th {ACM} {SIGKDD} International Conference on Knowledge Discovery and Data Mining},
  pages     = {199--208},
  publisher = {ACM},
  year      = {2009},
  doi       = {10.1145/1557019.1557047}
}

@inproceedings{CWW10,
  author    = {Wei Chen and Chi Wang and Yajun Wang},
  title     = {Scalable Influence Maximization for Prevalent Viral Marketing in Large-Scale Social Networks},
  booktitle = {Proceedings of the 16th {ACM} {SIGKDD} International Conference on Knowledge Discovery and Data Mining},
  pages     = {1029--1038},
  publisher = {ACM},
  year      = {2010},
  doi       = {10.1145/1835804.1835934}
}

@inproceedings{IRIE12,
  author    = {Kyomin Jung and Wooram Heo and Wei Chen},
  title     = {{IRIE}: Scalable and Robust Influence Maximization in Social Networks},
  booktitle = {Proceedings of the 12th {IEEE} International Conference on Data Mining},
  pages     = {918--923},
  publisher = {IEEE Computer Society},
  year      = {2012},
  doi       = {10.1109/ICDM.2012.79}
}

@inproceedings{BBCL14,
  author    = {Christian Borgs and Michael Brautbar and Jennifer Chayes and Brendan Lucier},
  title     = {Maximizing Social Influence in Nearly Optimal Time},
  booktitle = {Proceedings of the 25th Annual {ACM--SIAM} Symposium on Discrete Algorithms},
  pages     = {946--957},
  publisher = {SIAM},
  year      = {2014},
  doi       = {10.1137/1.9781611973402.70}
}

@misc{BBCL16,
  author    = {Christian Borgs and Michael Brautbar and Jennifer Chayes and Brendan Lucier},
  title     = {Maximizing Social Influence in Nearly Optimal Time},
  year      = {2016},
  howpublished = {arXiv:1212.0884v5},
  note      = {Revised full version; cited for the corrected budget-dependent running-time bound},
  url       = {https://arxiv.org/abs/1212.0884v5}
}

@inproceedings{TIM14,
  author    = {Youze Tang and Xiaokui Xiao and Yanchen Shi},
  title     = {Influence Maximization: Near-Optimal Time Complexity Meets Practical Efficiency},
  booktitle = {Proceedings of the 2014 {ACM} {SIGMOD} International Conference on Management of Data},
  pages     = {75--86},
  publisher = {ACM},
  year      = {2014},
  doi       = {10.1145/2588555.2593670}
}

@inproceedings{IMM15,
  author    = {Youze Tang and Yanchen Shi and Xiaokui Xiao},
  title     = {Influence Maximization in Near-Linear Time: A Martingale Approach},
  booktitle = {Proceedings of the 2015 {ACM} {SIGMOD} International Conference on Management of Data},
  pages     = {1539--1554},
  publisher = {ACM},
  year      = {2015},
  doi       = {10.1145/2723372.2723734}
}

@inproceedings{OPIM18,
  author    = {Jing Tang and Xueyan Tang and Xiaokui Xiao and Junsong Yuan},
  title     = {Online Processing Algorithms for Influence Maximization},
  booktitle = {Proceedings of the 2018 International Conference on Management of Data},
  pages     = {991--1005},
  publisher = {ACM},
  year      = {2018},
  doi       = {10.1145/3183713.3183749}
}

@inproceedings{GWWC20,
  author    = {Qintian Guo and Sibo Wang and Zhewei Wei and Ming Chen},
  title     = {Influence Maximization Revisited: Efficient Reverse Reachable Set Generation with Bound Tightened},
  booktitle = {Proceedings of the 2020 {ACM} {SIGMOD} International Conference on Management of Data},
  pages     = {2167--2181},
  publisher = {ACM},
  year      = {2020},
  doi       = {10.1145/3318464.3389740}
}

@article{DKLR00,
  author    = {Paul Dagum and Richard Karp and Michael Luby and Sheldon Ross},
  title     = {An Optimal Algorithm for {Monte Carlo} Estimation},
  journal   = {SIAM Journal on Computing},
  volume    = {29},
  number    = {5},
  pages     = {1484--1496},
  year      = {2000},
  doi       = {10.1137/S0097539797315306}
}

@article{Lakshmanan25,
  author    = {K. Lakshmanan},
  title     = {Influence Maximization Independent of Seed Set Size},
  journal   = {Operations Research Letters},
  volume    = {62},
  pages     = {107309},
  year      = {2025},
  doi       = {10.1016/j.orl.2025.107309}
}

@inproceedings{Wilder18,
  author    = {Bryan Wilder and Laura Onasch-Vera and Juliana Hudson and Jose Luna and Nicole Wilson and Robin Petering and Darlene Woo and Milind Tambe and Eric Rice},
  title     = {End-to-End Influence Maximization in the Field},
  booktitle = {Proceedings of the 17th International Conference on Autonomous Agents and MultiAgent Systems},
  pages     = {1414--1422},
  publisher = {International Foundation for Autonomous Agents and Multiagent Systems},
  year      = {2018},
  url       = {https://www.ifaamas.org/Proceedings/aamas2018/pdfs/p1414.pdf}
}

@article{GWWLT22,
  author    = {Qintian Guo and Sibo Wang and Zhewei Wei and Wenqing Lin and Jing Tang},
  title     = {Influence Maximization Revisited: Efficient Sampling with Bound Tightened},
  journal   = {ACM Transactions on Database Systems},
  volume    = {47},
  number    = {3},
  pages     = {12:1--12:45},
  publisher = {ACM},
  year      = {2022},
  doi       = {10.1145/3533817}
}

@misc{Seddighin26,
  author       = {Saeed Seddighin},
  title        = {Maximizing Social Influence in Almost Linear Time},
  year         = {2026},
  howpublished = {arXiv preprint arXiv:2609.36236},
  eprint       = {2609.36236},
  archivePrefix = {arXiv},
  primaryClass = {cs.DS},
  doi          = {10.48550/arXiv.2609.36236},
  url          = {https://arxiv.org/abs/2609.36236}
}
\end{document}